\documentclass[11pt]{article}
\usepackage{graphicx,amsthm,graphicx,fancyhdr,mathrsfs}
\usepackage{amsfonts}
\usepackage{amsmath}
\usepackage{amssymb}
\usepackage{url}
 \usepackage[colorlinks=true,citecolor=blue]{hyperref}
\usepackage{fancyhdr}
\usepackage{indentfirst}
\usepackage{enumerate}
\usepackage{amsthm}
\usepackage{color}
\usepackage{comment}
\usepackage{dsfont}
\usepackage{natbib}
\usepackage[misc]{ifsym}
\usepackage[toc,page]{appendix}
\usepackage{multirow}
\usepackage{float}
\usepackage{bm}
\usepackage{booktabs}
\usepackage{todonotes}
\usepackage[font=scriptsize]{caption} 

\addtolength{\textheight}{.5\baselineskip}
\def\d{\,\mathrm{d}}

\newcommand{\VaR}{\mathrm{VaR}}

\newcommand{\F}{\mathcal{F}}

\newcommand{\X}{\mathcal{X}}

\newcommand{\E}{\mathbb{E}}
\newcommand{\R}{\mathbb{R}}

\newcommand{\M}{\mathcal{M}}

\newcommand{\p}{\mathbb{P}}

\newcommand{\id}{\mathds{1}}

\renewcommand{\ge}{\geqslant}
\renewcommand{\le}{\leqslant}
\renewcommand{\geq}{\geqslant}
\renewcommand{\leq}{\leqslant}

\theoremstyle{plain}
\newtheorem{theorem}{Theorem}
\newtheorem{corollary}{Corollary}
\newtheorem{lemma}{Lemma}
\newtheorem{proposition}{Proposition}
\theoremstyle{definition}

\newtheorem{remark}{Remark}


\setlength{\bibsep}{3pt}

\usepackage{tikz}

\renewcommand{\cite}{\citet}

\topmargin -1.30cm \oddsidemargin -0.00cm \evensidemargin 0.0cm
\textwidth 16.56cm \textheight 23.20cm

\parindent 5ex


\makeatletter
\DeclareRobustCommand{\bsquare}{%
	\mathop{\vphantom{\sum}\mathpalette\bigstar@\relax}\slimits@
}

\newcommand{\bigstar@}[2]{%
	\vcenter{%
		\sbox\z@{$#1\sum$}%
		\hbox{\resizebox{.9\dimexpr\ht\z@+\dp\z@}{!}{$\m@th\dsquare$}}%
	}%
}
\makeatother

\usepackage[onehalfspacing]{setspace}

\newcommand{\dsquare}{\mathop{  \square} \displaylimits}

\begin{document}

	\title{Optimal insurance design with Lambda-Value-at-Risk}

	\author{Tim J.~Boonen\thanks{Department of Statistics and Actuarial Science, School of Computing and Data Science, The University of Hong Kong, Hong Kong, China \Letter~\texttt{tjboonen@hku.hk}} \and Yuyu Chen\thanks{Department of Economics, University of Melbourne, Melbourne, Victoria, Australia \Letter~\texttt{yuyu.chen@unimelb.edu.au}}\and Xia Han\thanks{School of Mathematical Sciences and LPMC, Nankai University, Tianjin, China \Letter~\texttt{xiahan@nankai.edu.cn}}\and Qiuqi Wang\thanks{Maurice R.~Greenberg School of Risk Science, Georgia State University, Atlanta, Georgia, U.S.A. \Letter~\texttt{qwang30@gsu.edu}}}
	\maketitle 
	
	\begin{abstract}
		This paper explores optimal insurance solutions based on the Lambda-Value-at-Risk ($\Lambda\VaR$). {Using the expected value premium principle, we first analyze a stop-loss indemnity and provide a closed-form expression for the deductible parameter. A necessary and sufficient condition for the existence of a positive and finite deductible is also established. We then generalize the stop-loss indemnity and show that, akin to the VaR model,   a limited stop-loss indemnity remains optimal within the $\Lambda\VaR$ framework.} Further,  we examine the use of  $\Lambda'\VaR$ as a premium principle  and show that  full or no insurance is optimal. We also identify {that} a limited loss indemnity is optimal  when $\Lambda'\VaR$ is solely used to determine the risk-loading in the premium principle. 
 Additionally, we investigate the impact of model uncertainty, particularly in scenarios where the loss distribution is unknown but lies within a specified uncertainty set. Our findings suggest that a limited stop-loss indemnity is optimal when the uncertainty set is defined using a likelihood ratio. Meanwhile, {when only the first two moments of the loss distribution are available,}
 we provide a closed-form expression for the optimal deductible in a stop-loss indemnity.
		
		~
		
		\noindent\textbf{Keywords}: Risk management, optimal insurance,  Lambda-Value-at-Risk, model uncertainty, stop-loss.
	\end{abstract} 
	
	~

	\section{Introduction}
 \label{sec:intro}
Optimal insurance is the problem of finding the optimal insurance contract. A decision maker (DM) seeks to transfer a portion of his loss to an insurer and will pay a premium for this coverage.  The premium is traditionally calculated as the expected value of the insured risk, plus a risk loading. The literature {on optimal insurance} originated from the early works of \cite{B60} and \cite{A63}, which focused on mean-variance and expected utility preferences, respectively.  {Many authors have extended these settings by considering alternative objective functions such as behavioral economic preferences and risk measures for evaluating future losses after insurance. For instance, \cite{CT07}, \cite{CTWZ08}, and  \cite{CT11} study the optimal insurance problem by minimizing the Value-at-Risk (VaR) and the Conditional Value-at-Risk (CVaR) of the DM's total risk exposure. Subsequently,  \cite{CYW13} and \cite{A15} generalize these settings with distortion risk measures, which include VaR and CVaR as special cases}. 
 
{In this paper, we explore optimal insurance  problems using the Lambda-Value-at-Risk ($\Lambda\VaR$) as the DM's risk measure. The $\Lambda\VaR$ is introduced by \cite{FMP14},  and has recently gained academic interest as a  generalization of the well-known VaR.  Unlike $\VaR$, which uses a fixed confidence level, $\Lambda\VaR$ incorporates a function $\Lambda$ as its confidence level that varies with the underlying loss realizations.  
This feature provides greater flexibility, allowing one to adjust  $\Lambda$ to  either increase  sensitivity to extreme losses or  focus more on the most likely outcomes.  Similar to VaR, and in contrast to CVaR,  $\Lambda\VaR$ is always finite (except for probability levels $0$ and $1$), even for heavy-tailed losses with infinite expectations.} While $\Lambda\VaR$ is a monotonic risk measure, it differs from VaR in that it is not cash- or comonotonic additive.  A characterization of $\Lambda\VaR$ is provided by \cite{BP22}, who  formalize a crucial property called locality. This,  together with monotonicity, normalization, and weak semi-continuity, uniquely characterizes $\Lambda\VaR$  with a non-increasing function $\Lambda(x)$ on $\R$.  $\Lambda\VaR$ has also been studied from various  perspectives, including robustness, elicitability and consistency \citep[see][]{BPR17}, estimation and backtesting \citep[see][]{HMP18, CP18}, quasi-convexity, cash subadditivity and quasi-star-shapedness \citep[see][]{HWWX25}, robust formulations \citep[see][]{HL24}, and capital allocation with the Euler rule \citep[see][]{IPP22}.


An optimal {insurance} problem based on $\Lambda\VaR$ with the  expected value premium is first {considered} by \cite{BBB23},  as an application to their main theory on $\Lambda$-fixed points. Their result assumes finite support of the function $\Lambda$ and solves the optimization problem  using a linear programming method. {As a significant improvement to the existing discussion, we solve the $\Lambda\VaR$-based optimal insurance problem with general decreasing $\Lambda$ functions and loss distributions by more concise theoretical proofs. Moreover, various assumptions of indemnities and premium principles are considered. } To the best of our knowledge, this work is the first to comprehensively study the optimal (re)insurance design problem with $\Lambda\VaR$ {in} a general setup.

{ We first focus on the problem with the expected premium principle. Assuming that the admissible set of indemnities is of stop-loss types, we present a closed-form expression for the optimal deductible. Moreover, we identify a necessary and sufficient condition for the existence of a positive and finite optimal deductible. Compared with the corresponding condition in the VaR-based optimal insurance problem of \cite{CT07}, our condition is less stringent. Next, we consider the more general class of incentive-compatible indemnity functions \citep{HMS83}, denoted by $\F$. Within the class $\F$, \cite{CT11} show that a limited stop-loss indemnity function is optimal for a  VaR-based optimal insurance problem.
We extend this finding to  $\Lambda\mathrm{VaR}$, with which the DM can adjust the
upper limit of the ceded loss function based on various model parameters and insurance premiums. 

Following this, we proceed to consider different premium principles with indemnities in $\F$. If a $\Lambda'\mathrm{VaR}$ premium principle is applied, we demonstrate that full or zero insurance is the optimal indemnity, with the optimal structure determined solely by the ordering of the $\Lambda\mathrm{VaR}$ and $\Lambda'\mathrm{VaR}$ of the total loss. In the case of a more general $\Lambda'\mathrm{VaR}$-based premium principle, where $\Lambda'\mathrm{VaR}$ is used to determine a risk loading on top of the expected value of losses, we show that a limited loss indemnity is optimal.}

The impact of model uncertainty in optimization based on risk measures has been increasingly recognized in academic research, leading to extensive studies on the topic. For instance,  distributionally robust portfolio optimization has been explored by \cite{HZFF10},  \cite{NSU10}, \cite{GX13},
\cite{BCZ22}, \cite{SZ23} and \cite{PWW20}.  More recently, model uncertainty has also been extensively studied in the context of insurance and reinsurance. 
\cite{ABCHK17} study the robust and Pareto optimality of insurance contracts; \cite{LM22} propose a robust reinsurance strategy  under moment constraints;  \cite{BBG23} and \cite{BJ24} examine robust insurance problems  by assuming  that the uncertainty set is defined using Wasserstein distance; \cite{CLY24} study the stop-loss and limited loss strategies when the 
uncertainty set  is characterized by mean, variance, and  Wasserstein distance;
\cite{LLS24} propose a non-cooperative optimal reinsurance problem incorporating likelihood ratio uncertainty. 

{This paper presents an analysis of the impact of model uncertainty on the $\Lambda$VaR optimal insurance problem.} We assume that the underlying loss distribution is unknown, but it is an element of a so-called uncertainty set, {which can be thought of as a collection of plausible loss distribution functions}. If such a set is generated by a ball centered around the probability measure used for pricing, measured using a likelihood ratio, then  a {limited} stop-loss indemnity is shown to be optimal {within $\F$}. When uncertainty is characterized by knowledge of the first two moments of the loss variable, and we restrict our focus to stop-loss indemnities,  a closed-form expression for the optimal deductible is provided. {In general}, we find that when the uncertainty set is defined by any collection of loss distributions, the robust $\Lambda\mathrm{VaR}$ optimization problem can be transformed into a robust VaR optimization problem over the same uncertainty set. Subsequently, the optimal probability level for $\Lambda\mathrm{VaR}$ can be determined based on the results from the VaR problem. This observation is consistent with \cite{HL24}, which does not incorporate an insurance strategy.

The findings of this paper are valuable to both academics and practitioners, as they provide insights into the optimal indemnities associated with a new risk measure, extending beyond the well-studied class of distortion risk measures. Similar to distortion risk measures, we show that optimal indemnities typically exhibit a piecewise-linear structure, with proportional or co-insurance generally being suboptimal. Specifically, our results complement the existing literature on optimal (re)insurance with risk measures, including but not limited to \cite{ABCHK17}, \cite{AB18}, \cite{TWWZ20}, \cite{ASL21}, and \cite{BJ24}.

{We note that the optimal insurance problems considered in this paper are different from the $\Lambda$VaR-based risk sharing problems in the literature \citep[e.g.,][]{L24,LTW24,XH24}.} 
In optimal risk sharing problems, the aim is to determine an optimal set of random variables that aggregate to the total risk in the market,  while minimizing the sum of risk measures. 
{This objective differs from that of optimal insurance problems, where the goal is to minimize a risk measure for a single agent. Moreover, in optimal insurance problems, the admissible set of indemnities is typically constrained to a specific class of non-decreasing 1-Lipschitz functions, as discussed in \cite{A15}.}
{ This assumption implies that the retained and ceded losses in optimal insurance problems are comonotonic.  In sharp contrast,  optimal risk sharing allocations are  not comonotonic as shown by  \cite{L24} and \cite{LTW24}.} Additionally, in optimal insurance, the insurer's objectives are summarized through a premium principle, which may include a risk loading.   

The paper is organized as follows. Section \ref{sec:setup} introduces $\Lambda\mathrm{VaR}$ and the optimal insurance problem. Section \ref{sec:EVpremium} examines optimal insurance indemnities under the expected value premium principle. Section \ref{sec:LambdaVaR} investigates optimal insurance when the premium principle {incorporates $\Lambda'\mathrm{VaR}$}. Section \ref{sec:model-uncertainty} discusses optimal insurance indemnities under model uncertainty. Section \ref{sec:examples} presents some numerical examples, and Section \ref{sec:conc} concludes.

\section{Problem formulation}
\label{sec:setup}
Let $(\Omega, \mathcal{G}, \mathbb{P})$ be a probability space.  Each random variable represents a random risk that is realized at a well-defined future period. 
 Throughout the paper,  ``increasing" and ``decreasing" are in the  non-strict (weak) sense.  Let  $\X$ be a convex cone of random variables.   For any $Z \in \X$, the cumulative distribution function (CDF) associated with $Z$ is denoted by $F_Z$.  For a function $f:\R\to\R$ and $x\in\R$, we write $\lim_{t\uparrow x}f(t)=f(x-)$.

 A random vector $(Z_1,\dots,Z_n)$ is comonotonic
if there exists a random variable $Z$ and increasing functions $f_1,\dots,f_n$ on $\R$ such that $Z_i=f_i(Z)$ a.s.~for every $i=1,\dots,n$.  We define the following properties for a mapping $\rho: \mathcal{X} \rightarrow \mathbb{R}$: 
\begin{itemize}
\item[{[A1]}] (Monotonicity)  $\rho(Y)\le \rho(Z)$ for all $Y,Z \in \mathcal X$ with $Y\le Z$;
\item[{[A2]}] (Cash additivity) $\rho (Y+c)= \rho (Y)+c$ for all  $c\in\R$;
\item[{[A3]}] (Comonotonic additivity)  $\rho(Y+Z)=\rho(Y)+\rho(Z)$ whenever $Y$ and $Z$ are comonotonic.
\end{itemize} Using the standard terminology in \cite{FS16}, a mapping \(\rho: \mathcal{X} \rightarrow \mathbb{R}\) is called a \emph{monetary risk measure} if it satisfies [A1] and [A2].  

In this paper, we consider an application of  $\Lambda\VaR$ in optimal insurance design problems.  {Suppose that a decision maker (DM) faces a non-negative random loss $X\in\X_+$, where $$\X_+=\{X\in\X\mid X\geq0,~\p(X<\infty)=1\}.$$
For generality, we do not assume that $X$ has a finite expectation as infinite-mean models are particularly 
 useful in modeling large insurance claims and catastrophic losses; see, e.g., \cite{CW25}.}
 For $\Lambda: \R_+ \rightarrow [0,1]$, the risk measure \emph{Lambda-Value-at-Risk} \citep[$\Lambda\VaR$, see][]{FMP14} is defined as
\begin{equation}\label{eq:lambdadef}
\Lambda\VaR(X) = \inf \{x\in\R_+: \mathbb P(X\le x) \ge \Lambda(x)\},~~~X\in \X_+.
\end{equation}
{ If $\Lambda=\alpha\in[0,1]$, then $\Lambda\VaR$ boils down to the \emph{Value-at-Risk (VaR)} at level   $\alpha$ given by 
\begin{equation*}\label{eq:VaR}\VaR_\alpha(X) =\inf \left\{x\in\R_+: \mathbb P(X\le x) \ge \alpha \right\},~~~X\in \X_+.\end{equation*}
By convention, we use $ \inf \emptyset = \infty$ and $\sup \emptyset = 0.$}
It is well known that $\VaR$ satisfies all three properties [A1]--[A3]. However, \(\Lambda\VaR\) satisfies [A1] but generally does not satisfy [A2] or [A3], and thus does not belong to the class of monetary risk measures.
 For more properties of $\Lambda\VaR$, we refer to \cite{BP22}, \cite{HL24} and \cite{XH24}.

{  
We focus on a special class of functions  $\mathcal{D}$ defined below
$$\mathcal{D}=\{\Lambda: \mathbb{R}_+ \to [0,1]\mid\mbox{$\Lambda$ is decreasing with $0<\Lambda(0)<1$}\}.$$}
In this setup, {higher losses are tolerated with a higher probability by the DM}. Although \cite{FMP14} mainly study increasing $\Lambda$,
the recent works of \cite{BPR17} and \cite{BP22} have shown that using a decreasing $\Lambda$ leads to many advantages, including robustness, elicitability, and an axiomatic characterization. In addition, $\Lambda\VaR$ with decreasing $\Lambda$ functions also satisfy cash-subadditivity and quasi-star-shapedness; see \cite{KR09} and \cite{HWWX25}.  For all these reasons, we assume that  $\Lambda$ is  a decreasing function in this paper.  

Under an insurance contract, the insurer agrees to cover part of the loss $X$ and requires a premium in return. The function $f:  \R_+\to\R_+$   is commonly described as the indemnity or the ceded loss function, while $R(x) \triangleq x-f(x)$ is known as the retained loss function.  
To prevent the potential ex post moral hazard, where the DM might be incentivized to manipulate the losses, we follow the literature, such as \cite{HMS83} and \cite{CD03}, and impose the incentive compatibility condition on the indemnity functions. That is,  we consider insurance contracts $f \in \mathcal{F}$, where \begin{equation}\label{eq:I}
\F:=\left\{f: \R_+\to\R_+ \mid f(0)=0 \text { and } 0 \leqslant f(x)-f(y) \leqslant x-y, \text { for all } 0 \leqslant y \leqslant x\right\} . 
\end{equation}
Obviously, for any $f \in \F, f(x)$ and $x-f(x)$ are increasing in $x$,  and any $f\in\F$ is 1-Lipschitz continuous.\footnote{{For $K>0$, a function $f:\R\to\R$ is \emph{$K$-Lipschitz continuous} if $|f(x)-f(y)|\le K|x-y|$ for all $x,y\in\R$. We call $K$ the \emph{Lipschitz constant}.}} The assumption that {$f\in\F$} is common in the literature; see e.g., \cite{A15} and the review paper by \cite{CC20}. { This class is  rich enough and  includes many commonly used indemnity functions, such as the stop-loss function $f(x)=(x-d)_{+}$ for some $d \geq 0$ and the quota-share function $f(x)=q x$ for some $q \in[0,1]$. 
Since Lipschitz-continuous functions are absolutely continuous,  they are almost everywhere differentiable. It  follows directly  that  $f$ can be written as the integral of its derivatives essentially bounded by $1$, and $\F$ can be represented as}
\begin{equation}\label{eq:marginal}
\F=\left\{f:  \R_+\to\R_+  \mid f(x)=\int_0^x q(t) \d t, ~0 \leq q \leq 1,~{ x\in\R_+}\right\},
\end{equation}
where $q$ is called the  \emph{marginal indemnity function} \citep{A15,ZWTA16}.

For a given $f \in \F$, the insurer prices ceded losses {using an exogenously given premium principle $\Pi:\X\mapsto\R_+$. The premium is thus given by $\Pi(f(X))$, which is non-negative.}  The risk exposure of the DM after purchasing insurance is given by
 \begin{equation}\label{eq:T_f}T_f(X)= X-f(X)+\Pi(f(X)).\end{equation}
We consider the following optimization problem:
\begin{equation}\label{eq:obj_general}
    \inf_{f\in\F}\Lambda\VaR(T_f(X)),
\end{equation}
where we aim to find an optimal ceded loss function such that the risk of the DM is minimized when evaluated by $\Lambda\VaR$. The following lemma is useful to solve the optimization problem \eqref{eq:obj_general}.

\begin{lemma}[Theorem 3.1 of \citealp{HWWX25}]\label{lem:representation}
    For $X\in\X_+$ and $\Lambda\in\mathcal D$,   
the risk measure $\Lambda\VaR$ in \eqref{eq:lambdadef} has the representation
\begin{equation}
\label{eq:lambdarep}
\Lambda\VaR(X) = \inf_{x\in\R_+} \left\{\VaR_{\Lambda(x)} (X) \vee x\right\} 
= \sup_{x\in\R_+} \left\{\VaR_{\Lambda(x)} (X) \wedge x\right\}.
\end{equation}
\end{lemma}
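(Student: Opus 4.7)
The plan is to reduce both representations to a single sublevel identity linking the defining set of $\Lambda\VaR$ to one expressed through ordinary $\VaR$. Setting $L := \Lambda\VaR(X)$ and $A := \{x\in\R_+ : F_X(x) \ge \Lambda(x)\}$, I first observe the equivalence
\[\VaR_{\Lambda(x)}(X) \le x \iff F_X(x) \ge \Lambda(x),\]
which follows from the left-quantile definition of $\VaR$ together with the right-continuity of $F_X$ (the set $\{y : F_X(y) \ge \alpha\}$ is the closed half-line $[\VaR_\alpha(X),\infty)$). Hence $A = \{x : \VaR_{\Lambda(x)}(X)\le x\}$ and $L = \inf A$ by the definition in \eqref{eq:lambdadef}.

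For the infimum representation, I split $\R_+$ into $A$ and $A^c$. On $A$, $\VaR_{\Lambda(x)}(X)\vee x = x$, so $\inf_{x\in A}\{\VaR_{\Lambda(x)}(X)\vee x\} = \inf A = L$. On $A^c$ one has $\VaR_{\Lambda(x)}(X)\vee x = \VaR_{\Lambda(x)}(X)$; setting $y := \VaR_{\Lambda(x)}(X) > x$ and using that $\Lambda\in\mathcal{D}$ is decreasing, $F_X(y) \ge \Lambda(x) \ge \Lambda(y)$, so $y\in A$ and therefore $y \ge L$. Both cases yield $\VaR_{\Lambda(x)}(X)\vee x \ge L$, and taking $x_n\in A$ with $x_n\downarrow L$ shows the infimum is attained in the limit, giving equality.

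For the supremum representation, I again split on the position of $x$ relative to $L$. If $x < L$, then $x\notin A$ (since $L=\inf A$), so $\VaR_{\Lambda(x)}(X) > x$ and $\VaR_{\Lambda(x)}(X)\wedge x = x$; letting $x\uparrow L$ shows the sup is at least $L$. If $x > L$, pick $z_n \in A$ with $z_n \downarrow L$ and eventually $z_n < x$; then $F_X(z_n) \ge \Lambda(z_n) \ge \Lambda(x)$ gives $\VaR_{\Lambda(x)}(X)\le z_n$, so passing to the limit $\VaR_{\Lambda(x)}(X) \wedge x \le L$. The boundary case $x=L$ is trivial via $\VaR_{\Lambda(x)}(X)\wedge x \le x = L$. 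Combining the two halves yields the supremum equals $L$.

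The delicate step throughout is the right-continuity argument supporting the sublevel identity, together with the handling of boundary cases (e.g., $L=0$, $L=+\infty$ under the conventions $\inf\emptyset = \esssup I$, or when the infimum $\inf A$ is not attained). All of these can be treated uniformly by approximating from within $A$, essentially exploiting the monotonicity of $\Lambda$ to transfer the $\VaR_{\Lambda(x)}$ inequality at $x$ to a qualifying element of $A$. The first equality is the substantive one; the second follows by the same bookkeeping and is primarily useful because it expresses $\Lambda\VaR(X)$ as a supremum rather than an infimum, which is the form exploited in the subsequent insurance optimization.
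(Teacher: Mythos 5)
Your proof is correct. Note that the paper itself offers no proof of this lemma: it is imported verbatim as Theorem 3.1 of \cite{HWWX21}, so there is no in-paper argument to compare against. Your self-contained derivation rests on the right identity, namely the sublevel-set equivalence $\VaR_{\Lambda(x)}(X)\le x \iff F_X(x)\ge \Lambda(x)$ (via right-continuity of $F_X$, which makes $\{y:F_X(y)\ge\alpha\}$ a closed half-line), so that $A=\{x:\VaR_{\Lambda(x)}(X)\le x\}$ and $L=\inf A$; the two representations then follow from the monotonicity of $\Lambda$, exactly as you argue. Both halves check out: on $A^c$ the transfer $y:=\VaR_{\Lambda(x)}(X)>x$ satisfies $F_X(y)\ge\Lambda(x)\ge\Lambda(y)$, hence $y\in A$ and $y\ge L$; and for $x>L$ the approximants $z_n\in A$, $z_n\downarrow L$, give $\VaR_{\Lambda(x)}(X)\le z_n\to L$. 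The degenerate cases you flag do need a word but all resolve trivially: if $A=\emptyset$ (e.g.\ $\Lambda\equiv 1$ with $X$ unbounded) then $L=\infty$ under the paper's convention and every term in the infimum is $\infty$ while the supremum of $x\mapsto x$ over $\R_+$ is $\infty$; if $L=0$ the supremum case $x<L$ is vacuous but the bound $\sup\le L$ still closes the argument; and $\VaR_{\Lambda(x)}$ with $\Lambda(x)=0$ equals $0$, which is consistent with the equivalence. So the only thing I would ask you to make explicit is that the "approximation from within $A$" step presupposes $A\ne\emptyset$, with the empty case handled separately as above.
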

In fact, \eqref{eq:lambdarep} holds for any $X\in\X$  and any decreasing $\Lambda: \R\to[0,1]$  that is not constantly $0$.
{ \begin{remark}
Note that Lemma \ref{lem:representation} does not hold for a general $\Lambda$ even when $\Lambda$ is increasing.  For example, consider $\Lambda(x) = {(1 + x)}/{4}$ for $0 \leq x < 1$, and $\Lambda(x) = \frac{1}{2}$ for $x \geq 1$.  Let  X be a random variable with  $X\sim U(0,1)$. We can compute $\Lambda\VaR(X) = 1/3>\inf_{x \in \mathbb{R}} \left\{ \operatorname{VaR}_{\Lambda(x)}(X) \vee x \right\} = 1/4$, showing that the infimum representation does not hold. 
\end{remark}}
Using Lemma \ref{lem:representation}, we can transform the optimization problem in \eqref{eq:obj_general} into an optimization problem based on VaR.  

\begin{proposition}\label{prop:var_problem}
    For $X\in\X_+$ and $\Lambda\in\mathcal D$, we have
    \begin{equation}\label{eq:prop1}\inf_{f\in\F}\Lambda\VaR(T_f(X))=\inf_{x\in\R_+}\left\{\inf_{f\in\F}\left\{\VaR_{\Lambda(x)}(X)-\VaR_{\Lambda(x)}(f(X))+\Pi(f(X))\right\}\vee x\right\}.\end{equation}
\end{proposition}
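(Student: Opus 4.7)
The plan is to apply Lemma \ref{lem:representation} to $T_f(X)$, exploit the comonotonicity of $f(X)$ and $X-f(X)$ to unwind the VaR, and then interchange the two infima.

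First I would apply the representation in Lemma \ref{lem:representation} to $T_f(X)$, yielding
$$\Lambda\VaR(T_f(X)) = \inf_{x\in\R_+}\left\{\VaR_{\Lambda(x)}(T_f(X))\vee x\right\}.$$
Next, since $\Pi(f(X))$ is a deterministic constant, cash additivity [A2] of $\VaR$ gives $\VaR_{\Lambda(x)}(T_f(X)) = \VaR_{\Lambda(x)}(X-f(X)) + \Pi(f(X))$. Now I observe that, because $f\in\F$ is increasing with $x-f(x)$ also increasing, the random variables $X-f(X)$ and $f(X)$ are both comonotonic with $X$, and in fact are comonotonic with each other. Comonotonic additivity [A3] of $\VaR$ applied to the identity $X=(X-f(X))+f(X)$ therefore yields $\VaR_{\Lambda(x)}(X-f(X)) = \VaR_{\Lambda(x)}(X)-\VaR_{\Lambda(x)}(f(X))$. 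Combining these,
$$\Lambda\VaR(T_f(X)) = \inf_{x\in\R_+}\left\{\big(\VaR_{\Lambda(x)}(X)-\VaR_{\Lambda(x)}(f(X))+\Pi(f(X))\big)\vee x\right\}.$$

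Taking infimum over $f\in\F$ on both sides, I would swap the order of the two infima (which is always valid for an iterated infimum, requiring no Fubini-type hypothesis), obtaining
$$\inf_{f\in\F}\Lambda\VaR(T_f(X)) = \inf_{x\in\R_+}\inf_{f\in\F}\left\{\big(\VaR_{\Lambda(x)}(X)-\VaR_{\Lambda(x)}(f(X))+\Pi(f(X))\big)\vee x\right\}.$$

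The last step is to push the inner infimum over $f$ through the maximum with $x$. For fixed $x$, writing $g(f):=\VaR_{\Lambda(x)}(X)-\VaR_{\Lambda(x)}(f(X))+\Pi(f(X))$, the elementary identity $\inf_{f\in\F}\bigl(g(f)\vee x\bigr) = \bigl(\inf_{f\in\F} g(f)\bigr)\vee x$ holds: the $\geq$ direction follows from $g(f)\vee x\geq g(f)$ and $g(f)\vee x \geq x$; the $\leq$ direction is split into the case $\inf_f g(f)\geq x$, in which $g(f)\vee x=g(f)$ eventually along a minimizing sequence, and the case $\inf_f g(f)<x$, in which $g(f)\vee x\to x$ along such a sequence. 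Applying this identity yields exactly \eqref{eq:prop1}.

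The only delicate point I anticipate is the comonotonic additivity step, which must be stated carefully because $X-f(X)$ and $f(X)$ need to be verified as comonotonic with respect to the same underlying ordering, but this is immediate from the fact that both are non-decreasing functions of the single random variable $X$. Everything else is routine bookkeeping.
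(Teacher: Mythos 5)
Your proposal is correct and follows essentially the same route as the paper's proof: apply the representation of Lemma \ref{lem:representation} to $T_f(X)$, use cash additivity and comonotonic additivity of $\VaR$ to decompose $\VaR_{\Lambda(x)}(T_f(X))$, interchange the two infima, and pull the infimum over $f$ through the maximum with $x$. The only difference is that you spell out the last interchange $\inf_f\bigl(g(f)\vee x\bigr)=\bigl(\inf_f g(f)\bigr)\vee x$ in detail, which the paper dismisses with the remark that $x$ does not depend on $f$.
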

\begin{proof}
    By Lemma \ref{lem:representation}, we have
    $$\begin{aligned}
       \inf_{f\in\F}\Lambda\VaR(T_f(X))&=\inf_{f\in\F}\inf_{x\in\R_+}\left\{\VaR_{\Lambda(x)}(T_f(X))\vee x\right\}\\
        &=\inf_{x\in\R_+}\inf_{f\in\F}\left\{\VaR_{\Lambda(x)}(T_f(X))\vee x\right\}\\
        &=\inf_{x\in\R_+}\left\{\inf_{f\in\F}\left\{\VaR_{\Lambda(x)}(X)-\VaR_{\Lambda(x)}(f(X))+\Pi(f(X))\right\}\vee x\right\},
    \end{aligned}$$
    where the last equality is because $\VaR$ is cash additive and  comonotonic additive,  and the term $x$ is independent of $f\in\F$. {The exchange of the two infima is valid because the minimization over $f$ and $x$ are independent.} The proof is complete.
\end{proof}
{ Proposition \ref{prop:var_problem} shows that, under certain assumptions on $\Lambda$, the optimization problem in \eqref{eq:obj_general} can be decomposed into two steps: \begin{itemize}\item[(i)] For a fixed $x$ (i.e., a fixed confidence level $\Lambda(x)$), we solve for the optimal ceded insurance contract $f(X)=f(X, \Lambda(x))$ that minimizes $\VaR_{\Lambda(x)} \left( T_f(X) \right)$. \item[(ii)] We  determine the optimal $x^*$ by solving 
$ \inf_{x \in \mathbb{R}_+} \left\{ \VaR_{\Lambda(x)} \left( T_{f(X, \Lambda(x))} (X) \right) \vee x \right\}.
$ The optimal ceded function is then given by $f^*(X) = f(X, \Lambda(x^*))$.
\end{itemize}

}
\section{Optimal results under expected value premium principle}\label{sec:EVpremium}
In this section, we assume that the insurer prices the indemnity functions using the expected value premium principle, a method that is widely adopted in insurance literature due to its simplicity and economic relevance:

\begin{equation}\label{eq:ev_p}
\Pi(f(X)) = (1 + \theta) \E[f(X)],
\end{equation}
where $\theta \ge 0$ represents the safety loading parameter. { Here, this premium principle can be seen as the indifference premium of a risk-neutral insurer, and $\theta$ reflects any additional costs. 
That is, it is a solution of the following equation:
\begin{equation}\label{eq:expected}
\E[-W+f(X)-\Pi(f(X))+C(f(X)]=\E[-W],
\end{equation}
where $W\in\R$ is the initial wealth, and $C(f(X))=\theta \E[f(X)]$ represents the costs of the insurer, assumed to be linear in the expected indemnity.}

\subsection{Optimal retention with deductible contract}\label{sec:3.2}
We first focus on {a specific but rather important class of insurance contracts in $\F$, the stop-loss insurance contracts. Within this class,} the insurer covers losses on the DM above a  deductible $l\in[0,\infty]$, that is,  $f(x)=(x-l)_+$ for $x\in\R_+$.  {We are particularly interested in stop-loss contracts because they are widely applied in practice, especially in the property and casualty insurance sectors. Moreover, they have been extensively studied in the literature;} see, e.g. \cite{B60},  \cite{CSYY14}, \cite{KM20}, \cite{LM22}, and \cite{CLY24}.  
 The risk exposure of the DM after purchasing insurance is given by
 \begin{equation}\label{eq:T_l}T_l(X)= X\wedge l+(1+\theta)\E[(X-l)_+].\end{equation}  Next, we  address the following optimization problem:
\begin{equation}\label{eq:obj_deductible}
    \inf_{l\in\R_+}\Lambda\VaR(T_l(X))
\end{equation}
by sequentially solving the two optimization problems in \eqref{eq:prop1}; {note that Proposition \ref{prop:var_problem} also holds if $\F$ is changed to a different admissible set of indemnities.}
 {In particular, the inner optimization problem is addressed by Theorem 2.1 in \cite{CT07}. However, as we relax the constraints on the loss distribution, specifically the assumption that the distribution function is continuous and strictly increasing, a modified proof is provided to accommodate this relaxation.}
 
Let $d^*=\VaR_{\theta^*}(X)$, where $\theta^*=\theta/(1+\theta)$. It is clear that $d^*=0$ for $\theta=0$.    Write $$M=d^*+(1+\theta)\E[(X-d^*)_+].$$
\begin{theorem}\label{thm:deductbile} Let $\Lambda\in\mathcal D$.  The solution $l^*\in[0,\infty]$ that solves the $\Lambda\VaR$-based insurance model   \eqref{eq:obj_deductible}   is  
given by 
\begin{equation}\label{eq:dec}
l^*= \begin{cases}d^* , &\text{if}~ M< \Lambda\VaR(X),  \\ \infty, & \text { otherwise }, \end{cases}
\end{equation}
and the corresponding minimum is given by $$\Lambda\VaR\left(T_{l^*}(X)\right)=M \wedge \Lambda\VaR(X).$$

\end{theorem}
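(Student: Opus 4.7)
The plan is to reduce the problem to the VaR minimization solved by Cai and Tan (2007) via Proposition \ref{prop:var_problem}, then simplify the resulting outer infimum using a lattice identity and Lemma \ref{lem:representation}. Applying Proposition \ref{prop:var_problem} to $f(x)=(x-l)_+$ and using the comonotonic additivity of $\VaR_{\Lambda(x)}$ together with $\VaR_{\Lambda(x)}((X-l)_+)=(\VaR_{\Lambda(x)}(X)-l)_+$, I would rewrite the inner expression as
\[
\VaR_{\Lambda(x)}(X)-(\VaR_{\Lambda(x)}(X)-l)_+ +(1+\theta)\E[(X-l)_+]\;=\; l\wedge \VaR_{\Lambda(x)}(X)+(1+\theta)\E[(X-l)_+].
\]
This is precisely the objective in the classical stop-loss VaR problem. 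By Theorem 2.1 of \cite{CT07}, its infimum over $l\in[0,\infty]$ equals $M\wedge \VaR_{\Lambda(x)}(X)$ and is attained at $l=d^*=\VaR_{\theta^*}(X)$ when $d^*\le\VaR_{\Lambda(x)}(X)$ (yielding value $M$), and at $l=\infty$ otherwise (yielding value $\VaR_{\Lambda(x)}(X)$).

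Plugging this into \eqref{eq:prop1}, the remaining task is to evaluate
\[
\inf_{x\in\R_+}\Big\{\big(M\wedge\VaR_{\Lambda(x)}(X)\big)\vee x\Big\}.
\]
Here I would use the lattice identity $\max(\min(a,b),c)\ge \min(a,\max(b,c))$, valid for any real numbers, together with $\inf_x\min(a,\phi(x))=\min(a,\inf_x\phi(x))$ for a constant $a$, to obtain the lower bound
\[
\inf_{x}\big\{(M\wedge\VaR_{\Lambda(x)}(X))\vee x\big\}\;\ge\; M\wedge \inf_x\{\VaR_{\Lambda(x)}(X)\vee x\}\;=\;M\wedge \Lambda\VaR(X),
\]
invoking Lemma \ref{lem:representation} in the last step. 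The matching upper bound is obtained by testing $x=M$ (giving value $M$) and any $x$ approaching the argmin in $\Lambda\VaR(X)=\inf_x\{\VaR_{\Lambda(x)}(X)\vee x\}$ (giving value approaching $\Lambda\VaR(X)$).

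Finally, I would identify the optimal $l^*$ by exhibiting stop-loss contracts that attain $M\wedge\Lambda\VaR(X)$. If $M\le \Lambda\VaR(X)$, take $l^*=d^*$: since $T_{d^*}(X)\le M$ almost surely, the definition \eqref{eq:lambdadef} immediately yields $\Lambda\VaR(T_{d^*}(X))\le M$, and combined with the lower bound just derived the value is exactly $M$. If $M>\Lambda\VaR(X)$, take $l^*=\infty$ (no insurance), so that $T_\infty(X)=X$ and $\Lambda\VaR(T_\infty(X))=\Lambda\VaR(X)$. The main technical obstacle is the lattice simplification of the outer infimum; once that is in place, the identification of $l^*$ is a short verification using monotonicity of $\Lambda\VaR$ and the fact that $T_l$ is a bounded perturbation for $l<\infty$.
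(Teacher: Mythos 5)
Your proposal is correct and follows the paper's skeleton for the first two steps: reduce via Proposition \ref{prop:var_problem} to $\inf_{x}\{\inf_l\{l\wedge \VaR_{\Lambda(x)}(X)+(1+\theta)\E[(X-l)_+]\}\vee x\}$ and invoke Theorem 2.1 of Cai and Tan (2007) to evaluate the inner infimum as $M\wedge\VaR_{\Lambda(x)}(X)$, which is exactly the paper's $M(x)$. Where you genuinely diverge is the outer step: the paper evaluates $\inf_x\{M(x)\vee x\}$ through a case analysis on two auxiliary quantities $\bar x=\sup\{y:M\le\VaR_{\Lambda(y)}(X)\}$ and $x^*=\inf\{x:M(x)\le x\}$, whereas you sandwich it with the distributive-lattice inequality $(M\wedge b)\vee c\ge M\wedge(b\vee c)$ for the lower bound $M\wedge\Lambda\VaR(X)$ and test $x=M$ together with the pointwise bound $(M\wedge\VaR_{\Lambda(x)}(X))\vee x\le\VaR_{\Lambda(x)}(X)\vee x$ for the matching upper bound; you then verify attainment of $l^*$ directly from $T_{d^*}(X)\le M$ a.s.\ (so $\Lambda\VaR(T_{d^*}(X))\le M$ by \eqref{eq:lambdadef}) and $T_\infty(X)=X$. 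This is cleaner and avoids the paper's bookkeeping with $\bar x$ and $x^*$; both routes are valid. One small slip worth noting: you assert that the inner infimum is attained at $l=d^*$ with value $M$ whenever $d^*\le\VaR_{\Lambda(x)}(X)$, but the correct condition (as in the paper) is $M\le\VaR_{\Lambda(x)}(X)$ --- when $d^*\le\VaR_{\Lambda(x)}(X)<M$ the minimizer is $l=\infty$ with value $\VaR_{\Lambda(x)}(X)$. This is harmless here because downstream you only use the value $M\wedge\VaR_{\Lambda(x)}(X)$ and re-establish the optimal $l^*$ by direct verification at the end, but the condition should be corrected if the intermediate claim is kept.
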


 \begin{proof}
 {If $\E[X]=\infty$, then $M=\infty$. As $\p(X<\infty)=1$,  we have $\Lambda\VaR(X)<\Lambda\VaR(T_l(X))=\infty$ for all $l\in[0,\infty)$. Therefore, $l^*=\infty$ and $\Lambda\VaR\left(T_{l^*}(X)\right)= \Lambda\VaR(X)<\infty=M.$ For the rest of the proof, we assume that $\E[X]<\infty$.}
  By Proposition \ref{prop:var_problem}, we have
    \begin{equation}\label{eq:var_deductible}
        \inf_{l\in\R_+}\Lambda\VaR(T_l(X))=\inf_{x\in\R_+}\left\{\inf_{l\in\R_+}\left\{\VaR_{\Lambda(x)}(X\wedge l)+(1+\theta)\E[(X-l)_+]\right\}\vee x\right\}.
    \end{equation}
{  
For a fixed $x\in\R_+$,  observe  from \eqref{eq:var_deductible} that 
    $$g(l):=\VaR_{\Lambda(x)}(X\wedge l)+(1+\theta)\E[(X-l)_+] $$
      is  decreasing  for  $l \in\left[ \VaR_{\Lambda(x)}(X), \infty\right)$,  with the limit $\VaR_{\Lambda(x)}(X)$ as $l \rightarrow \infty$.  For    $l \in\left[0, \VaR_{\Lambda(x)}(X)\right)$, $g(l)$   decreases  for $l\in [0, \VaR_{\Lambda(x)}(X) \wedge d^*)$ and increases for $l\in[ \VaR_{\Lambda(x)}(X)\wedge d^*, \VaR_{\Lambda(x)}(X))$. 
Therefore,  the function $g(l)$ attains its minimum value   at  either $d^*\geq 0$ or as $l\to\infty$. }   
    
When $\VaR_{\Lambda(x)}(X)>M$, we have
    $$\inf_{l\in\R_+}\left\{\VaR_{\Lambda(x)}(X\wedge l)+(1+\theta)\E[(X-l)_+]\right\}=M,$$
    with the corresponding optimal retention given by $l^*=d^*$;
    when $\VaR_{\Lambda(x)}(X)\leq M$, we have
    $$\inf_{l\in\R_+}\left\{\VaR_{\Lambda(x)}(X\wedge l)+(1+\theta)\E[(X-l)_+]\right\}=\VaR_{\Lambda(x)}(X),$$
    with the corresponding optimal retention given by $l^*=\infty$. For $x\in\R_+$, define 
    $$M(x)=M\id_{\{\VaR_{\Lambda(x)}(X)>  M\}}+\VaR_{\Lambda(x)}(X)\id_{\{\VaR_{\Lambda(x)}(X)\leq   M\}}.$$
    Hence, \eqref{eq:var_deductible} becomes
$$\inf_{l>0}\Lambda\VaR(T_l(X))=\inf_{x\in\R_+}\left\{M(x)\vee x\right\}.$$
Let 
$$\bar x=\sup\{x\in\R_+:\VaR_{\Lambda(x)}(X)>M\}\mbox{~~and~~}x^*=\inf\{x\in\R_+:M(x)\le x\}.$$
Here, $\bar x$ can take the value $\infty$,  but we have $x^*<\infty$.   Therefore, 
if $x^*> \bar x$, as $\Lambda$ is decreasing, we have $$\inf_{l>0}\Lambda\VaR(T_l(X))=\inf_{x\in\R_+}\left\{M(x)\vee x\right\}=\inf_{x\in\R_+}\left\{\VaR_{\Lambda(x)}(X)\vee x\right\}=\Lambda \VaR(X).$$
Moreover, we have 
$$\inf_{l\in\R_+}\Lambda\VaR(T_l(X))=\inf_{x\in\R_+}\left\{M(x)\vee x\right\}\le \inf_{x\in\R_+}\left\{M\vee x\right\}=M.$$
Thus, if $x^*>\bar x$, $\inf_{l\in\R_+}\Lambda\VaR(T_l(X))=\Lambda \VaR(X)\le M$.
If $x^*\leq \bar x$, $\inf_{l\in\R_+}\Lambda\VaR(T_l(X))=M< \Lambda \VaR(X).$ The inequality holds as  
$$\Lambda\VaR(X)=\inf_{x\in\R_+}\left\{\VaR_{\Lambda(x)}(X)\vee x\right\}> \inf_{x\in\R_+}\left\{M(x)\vee x\right\}=\inf_{l\in\R_+}\Lambda\VaR(T_l(X)).$$ Thus we have the desired result.
 \end{proof}

\begin{remark}
When a quota-share insurance contract is employed, the optimal control takes a `bang-bang' type, with the values determined by the relationship between \(\VaR_\alpha(X)\) and \((1 + \theta) \mathbb{E}[X]\). This characteristic arises due to the positive homogeneity of VaR. Given that the problem of minimizing \(\Lambda\VaR\) can be transformed into a VaR minimization problem, following the steps similar to those in the proof of Theorem \ref{thm:deductbile},  it is not difficult to verify  that the optimal solution for \(\Lambda\VaR\) under the quota-share insurance contract also retains the `bang-bang'  form.
\end{remark}

We can see from Theorem \ref{thm:deductbile} that the deductible in \eqref{eq:dec} coincides with that for minimizing \(\VaR_\alpha(T_l)\) with \(\alpha \in (0,1)\) when \(M \leq \min(\VaR_\alpha(X), \Lambda \VaR(X))\).
When \( l^* = \infty \),  it indicates that  the DM buys no insurance. The formulation of \(l^*\) depends solely on the safety loading factor and remains entirely unaffected by the functional form of \(\Lambda\). 
{Moreover,  we   have  $f^*(x)=x$ when $\theta=0$,  which implies that   the DM chooses to transfer all the risk to  the insurer when $\mathbb E[X]\leq \Lambda\VaR(X)$ (see Equation \eqref{eq:dec}) as the premium represents a ``cheap" insurance.}

 In the next result, we provide a necessary and sufficient condition for the existence of a positive and finite deductible, indicating that the DM always chooses to transfer part of the claim to the insurer.
\begin{theorem} \label{prop:2} {Assume that $\E[X]<\infty$ and  $\Lambda\in\mathcal D$.} The solution $0<l^*<\infty$ for problem \eqref{eq:obj_deductible} exists if and only if $\theta^*>F_X(0)$ and $\Lambda(M-\epsilon)> F_X(M-\epsilon)$ for all $\epsilon\in(0,M]$.
\end{theorem}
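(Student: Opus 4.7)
My plan is to view Theorem \ref{prop:2} as a translation of the dichotomy already established in Theorem \ref{thm:deductbile} into pointwise conditions on the primitives $\theta^*$, $F_X$, and $\Lambda$. By Theorem \ref{thm:deductbile}, $l^*$ can take only the two values $\VaR_{\theta^*}(X)$ and $\infty$, with $l^*=\VaR_{\theta^*}(X)$ precisely when $M\le \Lambda\VaR(X)$. Hence $0<l^*<\infty$ is equivalent to the conjunction of (a) $M\le \Lambda\VaR(X)$ (ruling out $l^*=\infty$) and (b) $\VaR_{\theta^*}(X)>0$ (ruling out $l^*=0$); finiteness of $\VaR_{\theta^*}(X)$ is automatic because $\theta^*\in(0,1)$ and $X$ is a.s.\ finite. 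The task therefore reduces to showing that (b) is equivalent to $\theta^*>F_X(0)$ and that (a) is equivalent to the stated pointwise inequality on $\Lambda$.

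For (b), I would invoke the definition $\VaR_{\theta^*}(X)=\inf\{x\in\R_+:F_X(x)\ge \theta^*\}$. If $F_X(0)\ge\theta^*$, then $0$ already lies in the defining set, forcing $\VaR_{\theta^*}(X)=0$. Conversely, when $F_X(0)<\theta^*$, the paper's standing assumption that $S_X$ is continuous and strictly decreasing on $(0,M]$ makes $F_X$ continuous and strictly increasing on that interval, so $\{x\ge 0:F_X(x)\ge\theta^*\}=[y,\infty)$ for a unique $y>0$, which gives $\VaR_{\theta^*}(X)=y>0$.

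For (a), I would unfold $\Lambda\VaR(X)=\inf\{x\in\R_+:F_X(x)\ge\Lambda(x)\}$. If $M\le \Lambda\VaR(X)$, then no $x<M$ can satisfy $F_X(x)\ge\Lambda(x)$, hence $\Lambda(x)>F_X(x)$ for every $x\in[0,M)$; substituting $x=M-\epsilon$ with $\epsilon\in(0,M]$ gives precisely the stated condition. Conversely, if $\Lambda(M-\epsilon)>F_X(M-\epsilon)$ for all $\epsilon\in(0,M]$, then $\{x\in\R_+:F_X(x)\ge\Lambda(x)\}\subseteq[M,\infty)$, so $\Lambda\VaR(X)\ge M$.

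The main (mild) obstacle is keeping strict versus non-strict inequalities straight at the boundaries $x=0$ and $x=M$ and confirming that the edge case $\epsilon=M$ (i.e., $x=0$) is correctly absorbed; both are handled by the right-continuity of $F_X$ together with the assumed continuity and strict monotonicity of $S_X$ on $(0,M]$. Combining the two equivalences yields the theorem.
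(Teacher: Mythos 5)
Your proof is correct, and it takes a genuinely different (and more economical) route than the paper's. The paper does not treat Theorem \ref{thm:deductbile} as a black box: for the ``if'' direction it re-enters that theorem's proof, splits the infimum into the regions $\{x:\VaR_{\Lambda(x)}(X)\ge M\}$ and $\{x:\VaR_{\Lambda(x)}(X)< M\}$, and runs a case analysis on whether $\Lambda(M-)>F_X(M)$ or $\Lambda(M-)=F_X(M)$ to show that the infimum over the second region is at least $M$; the ``only if'' direction is then handled by two separate contradiction arguments inside the same decomposition. You instead take the dichotomy of Theorem \ref{thm:deductbile} as given --- $l^*=\VaR_{\theta^*}(X)$ when $M\le\Lambda\VaR(X)$ and $l^*=\infty$ otherwise --- and observe that each stated hypothesis is exactly the pointwise unfolding of one of the two requirements: $\VaR_{\theta^*}(X)>0$ iff $\theta^*>F_X(0)$ (right-continuity of $F_X$ at $0$), and $M\le\Lambda\VaR(X)$ iff $F_X(x)<\Lambda(x)$ for all $x\in[0,M)$, read off directly from $\Lambda\VaR(X)=\inf\{x\in\R_+:F_X(x)\ge\Lambda(x)\}$. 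This is shorter, avoids the case analysis on $\Lambda(M-)$ entirely, and makes transparent that the theorem is just a translation of the condition $M\le\Lambda\VaR(X)$ into primitives. What the paper's longer route buys is an explicit re-verification that the optimal value equals $M$ under the hypotheses; what it costs is length and some opacity. Both arguments share the same implicit convention that the only optimizers are $\VaR_{\theta^*}(X)$ and $\infty$ (the paper's ``only if'' step asserts the same), so you lose no rigor relative to the paper on that point. One cosmetic slip: $\theta^*=\theta/(1+\theta)\in[0,1)$ rather than $(0,1)$, but the case $\theta^*=0$ is excluded anyway by the condition $\theta^*>F_X(0)\ge0$, so nothing breaks.
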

\begin{proof} 

We first show the ``if" part. If $\theta^*>F_X(0)$, we have $\VaR_{\theta^*}(X)>0$.
By the proof of Theorem \ref{thm:deductbile}, we can write
$$\inf_{l\in\R_+}\Lambda\VaR(T_l(X))=\inf_{\VaR_{\Lambda(x)}(X)\ge M}\left\{M\vee x\right\}\wedge \inf_{\VaR_{\Lambda(x)}(X)< M}\left\{\VaR_{\Lambda(x)}(X)\vee x\right\}.$$
For $x\in\R_+$ such that $\VaR_{\Lambda(x)}(X)\ge M$, we have $\Lambda(x)\ge F_X(M)$. Because $\Lambda$ is decreasing, we have $x\in[0,a]$ for some $a\ge 0$ and thus 
$$\inf_{\VaR_{\Lambda(x)}(X)\ge M}\left\{M\vee x\right\}=M.$$
For $x\in\R_+$ such that $\VaR_{\Lambda(x)}(X)<M$, we have $\Lambda(x)\le F_X(M)$. Since $\Lambda(M-\epsilon)>F_X(M-\epsilon)$ for all $\epsilon\in(0,M]$ and $F_X$ is continuous, we have $\Lambda(M-)\ge F_X(M)\ge \Lambda(x)$.
    When $\Lambda(M-)>F_X(M)$, we have $\Lambda(M-)>\Lambda(x)$ and thus $x\ge M>\VaR_{\Lambda(x)}(X)$. It follows that
    \begin{equation}\label{eq:case1}
        \inf_{\VaR_{\Lambda(x)}(X)< M}\left\{\VaR_{\Lambda(x)}(X)\vee x\right\}=\inf_{\VaR_{\Lambda(x)}(X)< M}x\ge M.
    \end{equation}
    When $\Lambda(M-)=F_X(M)$, we have $\VaR_{\Lambda(M-)}(X)=M$. Thus $x<M\le\VaR_{\Lambda(x)}(X)$ for all $x< M$ and $x\ge\VaR_{\Lambda(x)}(X)$ for all $x\ge M$. It follows that 
    \begin{equation}\label{eq:case2}
        \inf_{\VaR_{\Lambda(x)}(X)< M}\left\{\VaR_{\Lambda(x)}(X)\vee x\right\}=\inf_{\VaR_{\Lambda(x)}(X)< M,~x\ge M}x\ge M.
    \end{equation}
    Therefore, we have by \eqref{eq:case1} and \eqref{eq:case2} that $\inf_{\VaR_{\Lambda(x)}(X)< M}\left\{\VaR_{\Lambda(x)}(X)\vee x\right\}\ge M$.

    According to the discussion above, we have
$$\inf_{l\in\R_+}\Lambda\VaR(T_l(X))=\inf_{\VaR_{\Lambda(x)}(X)\ge M}\left\{M\vee x\right\}=M,$$
    and the corresponding optimal retention is given by $l^*=\VaR_{\theta^*}(X)$.

    To prove the ``only if" part for the existence of optimal retention, we first suppose for contradiction that $\theta^*\le F_X(0)$. It follows that $l^*=\VaR_{\theta^*}(X)\le 0$. This leads to a contradiction with the fact that $l^*>0$. Next, we suppose for contradiction that $\Lambda(b)\le F_X(b)$ for some $b<M$. It follows that $\VaR_{\Lambda(b)}(X)\le b$, and thus
    \begin{equation*}
        \inf_{\VaR_{\Lambda(x)}(X)< M}\left\{\VaR_{\Lambda(x)}(X)\vee x\right\}=\inf_{\VaR_{\Lambda(x)}(X)< M,~x<b}\left\{\VaR_{\Lambda(x)}(X)\vee x\right\}\wedge b\le b<M.
    \end{equation*}
    It yields that
    $$\inf_{l\in\R_+}\Lambda\VaR(T_l(X))<M$$
    with corresponding optimal retention $l^*=\infty$. This contradicts with the fact that $l^*<\infty$.

    Therefore, if the solution $0<l^*<\infty$ for \eqref{eq:obj_deductible} exists, then $\theta^*>F_X(0)$ and $\Lambda(M-\epsilon)\ge F_X(M-\epsilon)$ for all $\epsilon\in(0,M]$. This completes the proof.
\end{proof}


{ Although the inner problem in \eqref{eq:obj_deductible} and the optimal deductible in Theorem \ref{thm:deductbile} are similar to the results for VaR-based optimal insurance problems established in \cite{CT07} and \cite{CT11},} Theorem \ref{prop:2} provides a non-trivial characterization result for the existence of a positive optimal deductible, which is clearly more general than that for the $\VaR$ problem. Specifically, if \(\Lambda(x) \equiv \alpha\), the second condition in Theorem \ref{prop:2} simplifies to \(F(M) < \alpha\), implying \(M < \VaR_\alpha(X)\).  This directly aligns with the result in Theorem 2.1 of \cite{CT07}, which is a corollary of Theorems \ref{thm:deductbile} and \ref{prop:2}.
{
\begin{corollary} If $\Lambda(x) \equiv \alpha$ for any $x \ge 0$, where $\alpha \in (0,1)$,   the optimal  deductible $l^*>0$ for the problem \eqref{eq:obj_deductible} exists if and only if both conditions $\theta^*>F_X(0)$
and
$
M<\VaR_\alpha(X)
$ are 
satisfied. Furthermore, if $l^*$   exists, then it holds that 
$
l^ * = d^ *.
$
\end{corollary}}

\subsection{Optimal retention with  general contract }
{In this section, we explicitly derive the optimal solution for the  ceded loss functions in the general form defined by \eqref{eq:marginal}. Under the expected value premium principle, we aim to solve the following problem:
\begin{equation}\label{eq:general}
    \inf_{f\in\mathcal F}\Lambda\VaR(X-f(X)+(1+\theta)\E[f(X)]).
\end{equation}
}
For  $x\in\R_+$,  denoted by  \begin{equation}\label{eq:G}G(x)= d^* \wedge \VaR_{\Lambda(x)}(X)   +(1+\theta) \mathbb{E}\left[\min \left\{\left(X-d^*\right)_{+},\left(\VaR_{\Lambda(x)}(X)-d^*\right)_{+}\right\}\right].\end{equation}

	\begin{theorem}\label{thm:1} Let $\Lambda\in\mathcal D$, and {suppose that  $\Lambda$ is right-continuous.} The optimal $f^{*}$ that solves the $\Lambda\VaR$-based insurance model \eqref{eq:general} over the class of ceded loss functions $\F$ is given by
\begin{equation}\label{eq:f_general}
f^{*}(x)= \begin{cases}\min \left\{\left(x-d^*\right)_{+}, \VaR_{\Lambda(x^*)}(X)-d^*\right\}, &\text{if}~d^*< \VaR_{\Lambda(x^*)}(X),  \\ 0, & \text { otherwise },\end{cases}
\end{equation}
in which  \begin{equation}\label{eq:x_star}x^*=\inf\{x\in\R_+: G(x)\leq x\}<\infty.\end{equation}
Moreover, 
$\Lambda\VaR\left(T_{f^*}(X)\right)=x^*.$
\end{theorem}
	\begin{proof} (i) Given $x\in\R_+$, we first solve the inner minimization of \eqref{eq:prop1}, that is, {
$\inf_{f\in\F} g(x)$, where $$g(x): = \VaR_{\Lambda(x)}(X)-\VaR_{\Lambda(x)}(f(X))+\Pi(f(X)).$$}     We  can rewrite $g(x)$ as  $$\begin{aligned}g(x)= \int_0^\infty h(S_X(y))\d y + \int_0^\infty ((1+\theta)S_X(y) - h(S_X(y)))q(y)\d y, 
 \end{aligned}$$
where 
$h(t)=\id_{\{t>1-\Lambda(x)\}},$  and it is clear that the following function $q$ minimizes $g$: 
\begin{equation}\label{eq:q1}q(y)=\left\{\begin{aligned}&0,~~~~~&\text{if }(1+\theta)S_X(y) - h(S_X(y))>0,\\&1,~~~~~&\text{if }(1+\theta)S_X(y) - h(S_X(y))<0,\\& c(y),~~~&\text{otherwise},\end{aligned}\right.\end{equation}
where $c$ could be any $[0,1]$-valued {Lebesgue-measurable} function on $\{y:(1+\theta)S_X(y) - h(S_X(y))=0\}$.  Define $$H(y)=(1+\theta)S_X(y) - h(S_X(y)).$$

{
To give an explicit form of $q$, it suffices to study the sign of $H$. Noting that $S_X(y)> 1-\Lambda(x)$ if and only if $y<\VaR_{\Lambda(x)}(X)$, we have 
\begin{equation*}H(y)=\left\{\begin{aligned}&(1+\theta)S_X(y)-1,~~~~~&\text{if }y<\VaR_{\Lambda(x)}(X),\\&(1+\theta)S_X(y),~~~~~&\text{if }y\ge\VaR_{\Lambda(x)}(X).\end{aligned}\right.\end{equation*}
 Recall that $d^*=\VaR_{\theta^*}(X)$ where $\theta^*=\theta/(1+\theta)$.
When $y<\VaR_{\Lambda(x)}(X)$, $H(y)>0$ if $y\in[0,\VaR_{\Lambda(x)}(X)\wedge d^*)$ and $H(y)\le 0$ if $y\in[\VaR_{\Lambda(x)}(X)\wedge d^*,\VaR_{\Lambda(x)}(X))$. When $y\ge \VaR_{\Lambda(x)}(X)$, $H(y)\ge 0$. Thus,  we can set  $q(y)=0$ for $y\in[0,d^* \wedge \VaR_{\Lambda(x)}(X))\cup[\VaR_{\Lambda(x)}(X),\infty),$
 and $ q(y) =  1$ for  $y\in[\VaR_{\Lambda(x)}(X)\wedge d^*,\VaR_{\Lambda(x)}(X))$, which implies $f^*(y)=\min \left\{\left(y-d^*\right)_{+}, (\VaR_{\Lambda(x)}(X)-d^*)_+\right\}$.
 Therefore,  we can obtain that $ \inf_{f\in \F} g(x)=G(x).$
}

 

(ii) We are now ready to solve the outer optimization problem, that is,   $$\inf_{f\in \F}\Lambda\VaR(T_f(X))=\inf_{x\in\R_+}\left\{G(x)\vee x\right\}.$$  
It is evident that      $G(x)$  decreases in $x$ as   $\VaR_{\Lambda(x)}(X) $ decreases in $x$. 
Define $$\bar x=\sup\{x\in\R_+: \VaR_{\Lambda(x)}(X)>d^*\}, ~~\text{and}~~x^*=\inf\{x\in\R_+: G(x)\leq x\}.$$
{Note that $\bar x$ can be $\infty$.  Further, as $\Lambda(0)<1$ and $\p(X<\infty)=1$, we have $x^*<\infty$.} Let $$G_1(x)= d^*   +(1+\theta) \mathbb{E}\left[\min \left\{\left(X-d^*\right)_{+},(\VaR_{\Lambda(x)}(X)-d^*)_+\right\}\right].$$  
{Since $\Lambda(x)$ is right-continuous in $x$, we have $G(x)=G_1(x)$ for $x<\overline x$ and $G(x)=\VaR_{\Lambda(x)}$ for $x\geq \overline x$.} Additionally, if   $x<\overline{x}$,  we have  $d^*<\VaR_{\Lambda(x)}(X)$, then $$\begin{aligned}G(x)&=d^*   +(1+\theta) \mathbb{E}\left[\min \left\{\left(X-d^*\right)_{+},(\VaR_{\Lambda(x)}(X)-d^*)_+\right\}\right]\\&< d^*+(1+\theta)(\VaR_{\Lambda(x)}(X)-d^*)(1-F(d^*))\leq \VaR_{\Lambda(x)}(X).\end{aligned}$$ The second inequality is because $F(d^*)=\theta^*={\theta}/{(1+\theta)}$. Thus,   if $x^*< \overline{x}$,   $\inf_{x\in\R_+}\left\{G(x)\vee x\right\}=x^*<\Lambda\VaR(X) $, and if $x^*\geq\overline{x}$, $\inf_{x\in\R_+}\left\{G(x)\vee x\right\}=x^*=\Lambda\VaR(X)$.  The relationship between $G(x)$ and $x$ in the proof above is illustrated clearly in Figure \ref{fig1}. 
{ As $\Lambda$ is right-continuous, the infimum is attainable, thus $\min_{x\in\R_+}\{G(x)\vee x\}=x^*$ and the minimum is attained at $x=x^*$.}
Therefore, { according to (i)}, we have $\inf_{f\in \F}\Lambda\VaR(T_f(X))=x^*.$
\begin{figure}[htb!]
\centering \includegraphics[width=15cm]{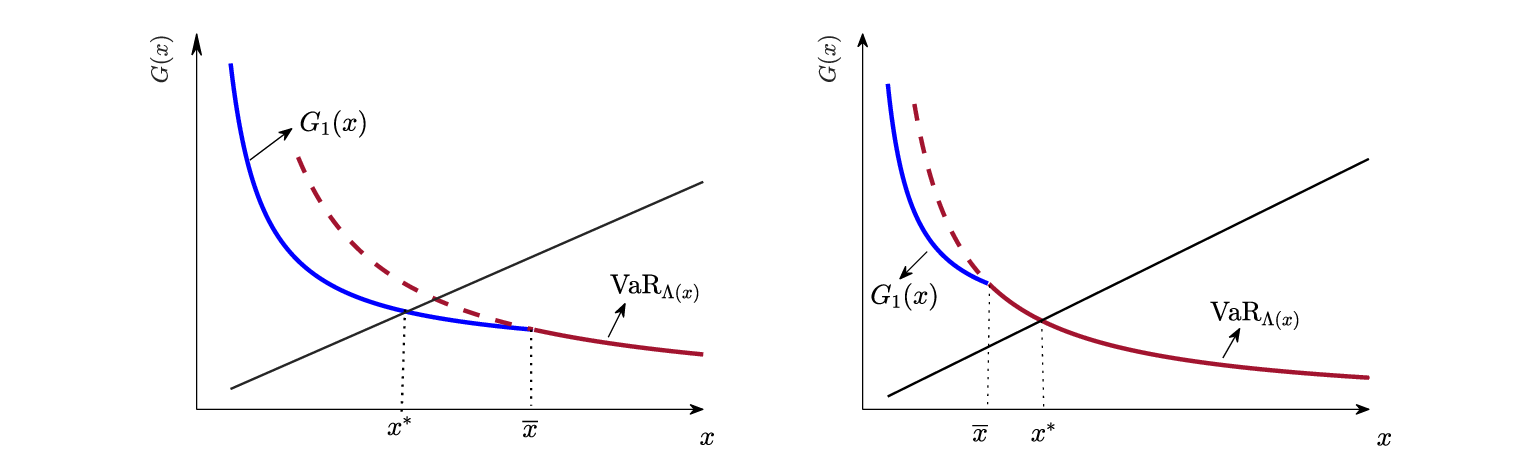}
 \captionsetup{font=small}
\caption{An illustration of the relationship between $G(x)$ and $x$}\label{fig1}
\end{figure}
\end{proof}

{\begin{remark} Theorem 3.2 of \cite{CT11}   provides the optimal ceded loss function for the case of VaR. Thus, for the first part of Theorem \ref{thm:1}, we can directly apply the result from  \cite{CT11} to the inner minimization problem and solve the outer optimization problem accordingly. However, in this paper, we use a different method based on the marginal indemnity function, and therefore, we provide a separate proof. 
 
Additionally, by comparing \(x^*\) in Theorem \ref{thm:1} with \(M\) in Theorem \ref{thm:deductbile},  and noting that   \(M \geq G(x)\) for any \(x \in \mathbb{R}_+\), we obtain $$ \Lambda\VaR\left(T_{l^*}(X)\right)\geq \Lambda\VaR\left(T_{f^*}(X)\right).$$ This result is  expected, as  the admissible set of 
 $f$ includes the special form of stop-loss insurance contracts.  In contrast,  for a general ceded loss function, the optimal form is a limited stop-loss with an upper limit of
 $\VaR_{\Lambda(x^*)}(X)-d^*$.  Finally, if $\theta=0,$ meaning that  the premium corresponds to a cheap insurance, the DM always  transfers all the risk to the insurer when $x< \VaR_{\Lambda(x^*)}(X)$. \end{remark}}
 
In the following corollary, we present a special case where $\Lambda\VaR$ simplifies to $\VaR$.
	\begin{corollary}\label{cor:1}
If $\Lambda(x) \equiv \alpha$ for any $x \ge 0$, where $\alpha \in (0,1)$,  the solution to \eqref{eq:obj_general} is given by   $$
f^{*}(x)= \begin{cases}\min \left\{\left(x-d^*\right)_{+}, \VaR_{\alpha}(X)-d^*\right\}, &\text{if}~d^*\le \VaR_{\alpha}(X),  \\ 0, & \text { otherwise },\end{cases}
$$ and thus $$\VaR_\alpha\left(T_{f^*}(X)\right)=d^* \wedge \VaR_{\alpha}(X)   +(1+\theta) \mathbb{E}\left[\min \left\{\left(X-d^*\right)_{+},\left(\VaR_{\alpha}(X)-d^*\right)_{+}\right\}\right].$$ 
This is consistent with the optimal indemnity function obtained by \cite{CT11}.
\end{corollary} 
 
By comparing Theorem \ref{thm:1} and Corollary \ref{cor:1},    we observe that the optimal ceded loss function for  $\Lambda\VaR$ is identical to the one for  VaR  when $d^* \leq \min(\VaR_\alpha(X), \VaR_{\Lambda(x^*)}(X))$. 
 However,   $\Lambda\VaR$ can offer  greater flexibility compared to a VaR-based contract  in determining the upper limit of the ceded loss function, as  $x^*$ depends on $\Lambda$ and $\theta$. 

\section{Optimal results under $\mathbf{\Lambda}$VaR premium principle}\label{sec:LambdaVaR}
{This section studies the case in which a Lambda-VaR risk measure is incorporated into the premium calculation. Using a $\Lambda\VaR$ can be attractive for the insurer for the following two reasons. First, the VaR is often used as a quantile or percentile premium \citet[e.g., Section 5 in ][]{KGDD08}, where the maximum probability of a loss in a contract is bounded by a fixed value. $\Lambda\VaR$ can be seen as an extension of the quantile premium, where the quantile parameter is not fixed but depends on the loss distribution itself. This setting allows $\Lambda\VaR$ to flexibly adjust the premium based on the size of the loss and the risk preferences, making it more adaptable for different risk scenarios. 
Second, suppose that the insurer measures the risk position after selling the insurance indemnity $f(X)$ for the deterministic premium $\Pi\geq0$ by: $\Lambda'\VaR(f(X)-\Pi)$, where $\Lambda'\in\mathcal D$.\footnote{{Here, we  extend the domain of $\Lambda'\VaR$ to the set $\mathcal X$, where $\Lambda$ is a decreasing function defined on $\mathbb{R}$.}} By cash subadditivity of $\Lambda'\VaR$ \citep[see][]{HWWX25}, a premium of the insurer that satisfies \begin{equation*}
\Lambda'\VaR(f(X)-\Pi)\leq\Lambda'\VaR(0)=0
\end{equation*}
must satisfy $\Pi\geq\Lambda'\VaR(f(X))$.
Different from \eqref{eq:expected} for the expected premium principle, using the $\Lambda'\VaR$ premium principle mitigates the total risk position of the insurer compared with no insurance when he measures the risk by $\Lambda'\VaR$. This may be preferable for risk-averse insurers in some situations.
Therefore, we define the premium principle for a general 
$f\in\mathcal F$ as 
\begin{equation*} \Pi(f(X))=\Lambda'\VaR(f(X)),\end{equation*} and aim to solve the following optimization problem
\begin{equation}\label{eq:ev}
    \inf_{f\in\F}\Lambda\VaR(T_f(X))= \inf_{f\in\F}\Lambda\VaR(X-f(X)+\Lambda'\VaR(f(X)) )
\end{equation}
with   $\Lambda, \Lambda'\in\mathcal {D}$.}

It is easy to verify that for $\alpha, \beta\in(0,1)$, $f^*(x)=x\id_{\{\alpha\geq \beta\}}$ is a  solution to the following optimization problem 
\begin{equation}\label{eq:ev0}
    \inf_{f\in\F}\VaR_\alpha(X-f(X)+\VaR_\beta(f(X)).
\end{equation}
A natural question arises: Whether a similar observation can still be made  for the optimization problem \eqref{eq:ev}.  
Note that $\Lambda\VaR$ is monotonic with respect to $\Lambda$ in the sense that  if $\Lambda_1 \leq \Lambda_2$ then each of the $\Lambda_1\VaR$ is smaller than the corresponding $\Lambda_2\VaR$; see Proposition 3c of \cite{BP22}.  

 In the following proposition, we show that for the $\Lambda\VaR$ premium principle, a bang-bang strategy can still be optimal. 
\begin{proposition}\label{prop:3}For  $\Lambda, \Lambda'\in\mathcal {D}$, the optimal $f^{*}$ that solves  the $\Lambda\VaR$-based insurance model \eqref{eq:ev} over the class of ceded loss functions $\F$ is given by
\begin{equation}\label{eq:bang_bang}
f^{*}(x)= \begin{cases} x, &\text{if}~ \Lambda'\VaR(X)   \leq \Lambda\VaR(X) ,   \\0 , &\text{if}~ \Lambda'\VaR(X)   >\Lambda\VaR(X).\end{cases}
\end{equation} 
The corresponding minimum is given by $$\Lambda\VaR\left(T_{f^*}(X)\right)=\Lambda\VaR(X)  \wedge \Lambda'\VaR(X).$$   
\end{proposition}
\begin{proof}Note that, by  Lemma \ref{lem:representation} and Proposition \ref{prop:var_problem},
$$\begin{aligned}\inf_{f\in\F}\Lambda\VaR(T_f(X))&=\inf_{x\in\R_+}\left\{\inf_{f\in\F}\left\{\VaR_{\Lambda(x)}(X)-\VaR_{\Lambda(x)}(f(X))+\Pi(f(X))\right\}\vee x\right\}\\&  =\inf_{x\in\R_+}  \left\{\inf_{f\in\F}\left \{ \VaR_{\Lambda(x)}(X)-\VaR_{\Lambda(x)}(f(X))+ \inf_{y \in \R_+} \left\{\VaR_{\Lambda'(y)} (f(X)) \vee y \right\} \right \} \vee x \right\}\\&= \inf_{x\in\R_+}  \left\{\inf_{y \in \R_+} \inf_{f\in\F}    \left \{ \VaR_{\Lambda(x)}(X)-\VaR_{\Lambda(x)}(f(X))+ \VaR_{\Lambda'(y)} (f(X)) \vee y   \right \} \vee x \right\} .\end{aligned}$$   

 Fix $x,y\ge 0$. If $\Lambda(x)\leq \Lambda'(y)$, we have $f^*(X)=0$.
 This is because for any $f(X)>0$ and $y$ such that $\Lambda(x)\leq \Lambda'(y)$, we have 
 $$\VaR_{\Lambda(x)}(X)-\VaR_{\Lambda(x)}(f(X))+ \VaR_{\Lambda'(y)} (f(X)) \vee y \ge \VaR_{\Lambda(x)}(X),$$ which implies that 
 $$\begin{aligned}&\inf_{x\in\R_+}  \left\{\inf_{y \in \R_+}   \left \{ \VaR_{\Lambda(x)}(X)-\VaR_{\Lambda(x)}(f(X))+ \VaR_{\Lambda'(y)} (f(X)) \vee y   \right \} \vee x \right\}\\&\geq  \inf_{x\in\R_+}  \left\{ \VaR_{\Lambda(x)}(X)   \vee x \right\}= \Lambda\VaR(X).\end{aligned} $$
 On the other hand, if   $\Lambda(x)>\Lambda'(y)$, we   have $f^*(X)=X$.
 Indeed, for any $f\in\mathcal{F}$, if $\VaR_{\Lambda'(y)} (f(X))\le y$, then
$$\begin{aligned}
    &\VaR_{\Lambda(x)}(X)-\VaR_{\Lambda(x)}(f(X))+ \VaR_{\Lambda'(y)} (f(X)) \vee y\\&\ge \VaR_{\Lambda'(y)}(X)-\VaR_{\Lambda'(y)}(f(X))+y\ge \VaR_{\Lambda'(y)}(X),
\end{aligned}$$
while if  $\VaR_{\Lambda'(y)} (f(X))> y$, then
$$\begin{aligned}
    &\VaR_{\Lambda(x)}(X)-\VaR_{\Lambda(x)}(f(X))+ \VaR_{\Lambda'(y)} (f(X)) \vee y\\&\ge  \VaR_{\Lambda'(y)}(X)-\VaR_{\Lambda'(y)}(f(X))+ \VaR_{\Lambda'(y)} (f(X))= \VaR_{\Lambda'(y)} (X).
\end{aligned}$$
Thus, we can see that the optimal contract is either full or no insurance, and its form does not depend on \(x\) and \(y\).  Therefore, if $f^*(X)=X$, we have 
$$ \inf_{f\in\F}\Lambda\VaR(T_f(X))=\inf_{x\in\R_+}  \left\{\inf_{y\in\R_+}     \left \{ \ \VaR_{\Lambda'(y)} (X) \vee y   \right \} \vee x \right\} =\Lambda'\VaR(X).$$
Otherwise, we have 
$$ \inf_{f\in\F}\Lambda\VaR(T_f(X))=\inf_{x\in\R_+}  \left\{\inf_{y\in\R_+}   \left \{ \ \VaR_{\Lambda(x)} (X) \vee y   \right \} \vee x \right\} =\Lambda\VaR(X),$$ which completes the proof.
\end{proof}

\begin{remark} 
Note that the problem \eqref{eq:ev0} can also be interpreted as a risk-sharing problem between two agents, with VaR serving as their preference measure due to its cash additivity property. Therefore, having one agent bear all the risk can be an optimal solution.   
By Theorem 11 of \cite{XH24}, the solution to \eqref{eq:ev} also applies to the problem of risk sharing between two $\Lambda\VaR$ agents, even though $\Lambda\VaR$ is not cash additive. 
In fact, Theorem 11 of \cite{XH24} shows that for \(\Lambda_i: \mathbb{R}_{+} \rightarrow [0, 1]\) being decreasing for \(i \in \{1, \ldots, n\}\) with \(n \geq 2\) and \(\Lambda = \min_{1 \leq i \leq m} \Lambda_i\), we have
$$\inf_{(X_1, \ldots, X_m) \in \mathbb{A}_n(X)} \sum_{i=1}^m \Lambda_i\VaR(X_i) = \Lambda\VaR(X),$$
where
$$\mathbb{A}_n(X) = \{(X_1, \ldots, X_n) : X_i \in \X_+, \sum_{i=1}^n X_i = X, \text{ and } X_1, \ldots, X_n \text{ are comonotonic}\}.$$
For a comprehensive  discussion on VaR-based and $\Lambda\VaR$-based risk-sharing problems, see \cite{ELW18}, \cite{L24} and \cite{XH24}.

  \end{remark} 

The optimal ceded loss function  in Proposition \ref{prop:3}  is generally not unique, similar to the case of VaR. 
Although a bang-bang strategy can be mathematically optimal, it is often less desirable and practically applicable in real-world settings.   To address this, we modify the premium in the subsequent context as follows: $$\Pi(f(X))=\E[f(X)]+\theta(\Lambda'\VaR(f(X))-\E[f(X)])$$ where $\Lambda'\in \mathcal D$ and $\theta\in[0,1]$ represents a safety loading. 
{ Similarly to \eqref{eq:expected}, such a premium principle satisfies the following condition for a risk-neutral insurer:
$$\E[-W+f(X)-\Pi(f(X))+C'(f(X))]=\E[-W],$$
with the cost of the insurer $C'(f(X))=\theta(\Lambda'\VaR(f(X))-\E[f(X)])$.  This formulation accounts for both the expected value of the function $f(X)$ and an additional safety margin that scales with the deviation of $\Lambda' \VaR(f(X))$ from $\E[f(X)]$.}
The optimization problem in \eqref{eq:ev} becomes 

\begin{equation}\label{eq:obj_3}
    \inf_{f\in\F}\Lambda\VaR(X-f(X)+\E[f(X)]+\theta(\Lambda'\VaR(f(X))-\E[f(X)])).
\end{equation}
\begin{theorem}\label{thm:3} Let $\Lambda,{\Lambda'}\in\mathcal D$, and {suppose that  $\Lambda$ is right-continuous.} The optimal $f^{*}$ that solves the $\Lambda\VaR$-based insurance model  \eqref{eq:obj_3} over the class of ceded loss functions $\F$ is given by
\begin{equation}\label{eq:f4}
f^*(x)= x\wedge\VaR_{\Lambda(x^*)}(X),
\end{equation} in which \begin{equation}\label{eq:x4}x^*=\inf\{x\in\R_+: (1-\theta)\E[  X\wedge \VaR_{\Lambda(x)}(X)]+\theta (\Lambda'\VaR ( X) \wedge \VaR_{\Lambda(x)}(X))\leq x\}{<\infty}.\end{equation}  Moreover,  $\Lambda\VaR\left(T_{f^*}(X)\right)= x^*.$
	\end{theorem}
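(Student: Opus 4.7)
The plan is to follow the two-level strategy of the proof of Theorem~\ref{thm:1}. By Proposition~\ref{prop:var_problem},
\begin{equation*}
\inf_{f\in\F}\Lambda\VaR(T_f(X)) = \inf_{x\in\R_+}\bigl\{\psi(x)\vee x\bigr\},
\end{equation*}
where $\psi(x):=\inf_{f\in\F}\{\VaR_{\Lambda(x)}(X)-\VaR_{\Lambda(x)}(f(X))+(1-\theta)\E[f(X)]+\theta\Lambda'\VaR(f(X))\}$. The core task is to evaluate $\psi(x)$ for each fixed $x$, after which the outer problem reduces to a one-dimensional monotonicity exercise as before.

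To disentangle the nested $\Lambda'\VaR$, I will apply Lemma~\ref{lem:representation} a second time, writing $\theta\Lambda'\VaR(f(X))=\inf_{y\in\R_+}\{\theta\VaR_{\Lambda'(y)}(f(X))\vee\theta y\}$ and swapping the two infima so that $\psi(x)=\inf_{y\in\R_+}\inf_{f\in\F} J(f,x,y)$ with $J(f,x,y):=\VaR_{\Lambda(x)}(X)-\VaR_{\Lambda(x)}(f(X))+(1-\theta)\E[f(X)]+\theta(\VaR_{\Lambda'(y)}(f(X))\vee y)$. Fix $x,y$ and set $a:=\VaR_{\Lambda(x)}(X)$, $b:=\VaR_{\Lambda'(y)}(X)$. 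Parameterising $f$ via its marginal $q\in[0,1]$ as in \eqref{eq:marginal} and using $\VaR_\alpha(f(X))=\int_0^{\VaR_\alpha(X)}q(t)\d t$ (valid since $f$ is $1$-Lipschitz and increasing), one can rewrite $J(\cdot,x,y)=\max\{\mathrm{obj}_1(q),\mathrm{obj}_2(q)\}$, where $\mathrm{obj}_1$ replaces the $\vee y$ by the constant $\theta y$ and $\mathrm{obj}_2$ replaces it by $\theta\int_0^b q(t)\d t$. Both are affine in $q$; a short sign check on the coefficient of $q(t)$ in each functional (splitting $\mathrm{obj}_2$ into the cases $a\le b$ and $b<a$) shows that the same $q^*(t)=\id_{\{t<a\}}$, i.e.\ $f^*(s)=s\wedge a$, is the pointwise minimiser of both. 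Because a single $f^*$ minimises both sub-objectives, it minimises their maximum, giving $\inf_{f\in\F} J(f,x,y)=(1-\theta)\E[X\wedge a]+\theta(y\vee(a\wedge b))$.

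Taking the infimum over $y$, observe that $\VaR_{\Lambda'(y)}(X\wedge a)=\VaR_{\Lambda'(y)}(X)\wedge a$, so Lemma~\ref{lem:representation} applied to $X\wedge a$ yields
\begin{equation*}
\inf_{y\in\R_+}\bigl\{y\vee(a\wedge\VaR_{\Lambda'(y)}(X))\bigr\}=\Lambda'\VaR(X\wedge a)=\Lambda'\VaR(X)\wedge a.
\end{equation*}
Therefore $\psi(x)=(1-\theta)\E[X\wedge\VaR_{\Lambda(x)}(X)]+\theta(\Lambda'\VaR(X)\wedge\VaR_{\Lambda(x)}(X))$, matching the bracketed expression in \eqref{eq:x4}. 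Since $\VaR_{\Lambda(x)}(X)$ is decreasing in $x$, $\psi$ is decreasing and $x\mapsto x$ is increasing, so by the same monotonicity argument used at the end of the proof of Theorem~\ref{thm:1} (Figure~\ref{fig1}), the outer infimum is attained at $x^*=\inf\{x:\psi(x)\le x\}$ with value $\psi(x^*)$, yielding the stated cap $f^*(x)=x\wedge\VaR_{\Lambda(x^*)}(X)$.

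The principal technical hurdle is the sub-case analysis in $\mathrm{obj}_2$: one must check that a single $q^*=\id_{[0,a)}$ simultaneously minimises $\mathrm{obj}_1$ and $\mathrm{obj}_2$ so that $\inf_f\max\{\mathrm{obj}_1,\mathrm{obj}_2\}=\max\{\inf_f\mathrm{obj}_1,\inf_f\mathrm{obj}_2\}$. Without this coincidence, the exchange of infima would provide only a lower bound and the clean cap structure need not be optimal; beyond this, the argument reduces mechanically to Lemma~\ref{lem:representation} and the monotonicity step from Theorem~\ref{thm:1}.
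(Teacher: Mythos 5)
Your proposal is correct, and the inner minimization is handled by a genuinely different --- and leaner --- argument than the paper's. The paper splits the inner problem according to whether $\VaR_{\Lambda'(y)}(f(X))$ exceeds $y$, attaches a Lagrange multiplier $\lambda$ to each constrained subproblem, runs a case analysis on $\Lambda(x)\lessgtr\Lambda'(y)$, optimizes over an auxiliary deductible $d$, and only then compares the resulting value functions $G_1,G_2,G_3$ in $y$. You instead write $\theta\bigl(\VaR_{\Lambda'(y)}(f(X))\vee y\bigr)$ as the maximum of two functionals that are affine in the marginal indemnity $q$, check that the pointwise sign pattern of the coefficients (negative on $[0,a)$, nonnegative on $[a,\infty)$ in both cases, with $a=\VaR_{\Lambda(x)}(X)$) gives the \emph{same} minimizer $q^*=\id_{[0,a)}$ for both, and use the elementary fact that a common minimizer of two functions minimizes their maximum; this sidesteps the Lagrangian entirely and any question of whether the multiplier relaxation is tight. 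Your second idea --- collapsing the $y$-infimum via $\VaR_{\Lambda'(y)}(X\wedge a)=\VaR_{\Lambda'(y)}(X)\wedge a$ and a second application of Lemma~\ref{lem:representation}, so that $\inf_y\{y\vee(a\wedge\VaR_{\Lambda'(y)}(X))\}=\Lambda'\VaR(X\wedge a)=\Lambda'\VaR(X)\wedge a$ --- replaces the paper's term-by-term minimization of $G_1,G_2,G_3$ and is the cleanest part of the argument (the identity $\Lambda'\VaR(X\wedge a)=\Lambda'\VaR(X)\wedge a$ follows directly from the definition \eqref{eq:lambdadef}, since every $z\ge a$ trivially satisfies the defining inequality for $X\wedge a$ while for $z<a$ the condition coincides with that for $X$). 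The concluding monotonicity step, and the identification of the infimum with $\psi(x^*)$, is carried out at the same level of rigor as in the paper's proofs of Theorems~\ref{thm:1} and~\ref{thm:3}, so no additional gap is introduced there. In short: same outer skeleton, but a shorter and self-contained treatment of the inner problem whose only delicate point --- the coincidence of the two pointwise minimizers --- you correctly isolate and verify.
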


\begin{proof} By  Lemma \ref{lem:representation} and Proposition \ref{prop:var_problem}, we have
$$\begin{aligned}&\inf_{f\in\F}\Lambda\VaR(T_f(X))\\&=\inf_{x\in\R_+}\left\{\inf_{f\in\F}\left\{\VaR_{\Lambda(x)}(X)-\VaR_{\Lambda(x)}(f(X))+\Pi(f(X))\right\}\vee x\right\}\\&
=\inf_{x\in\R_+}\left\{\inf_{f\in\F}\left\{\VaR_{\Lambda(x)}(X)-\VaR_{\Lambda(x)}(f(X))+ \E[f(X)]+\theta(\Lambda'\VaR(f(X))-\E[f(X)])\right\}\vee x\right\}
\\&= \inf_{x\in\R_+}  \left\{\inf_{f\in\F}    \left \{ \VaR_{\Lambda(x)}(X)-\VaR_{\Lambda(x)}(f(X)) +(1-\theta)\E[f(X)]+\theta \inf_{y\in\R_+} \left\{ \VaR_{\Lambda'(y)} (f(X)) \vee   y  \right\}  \right \} \vee x \right\} \\&= \inf_{x\in\R_+}  \left\{\inf_{y\in\R_+} \inf_{f\in\F}    \left \{ \VaR_{\Lambda(x)}(X)-\VaR_{\Lambda(x)}(f(X)) +(1-\theta)\E[f(X)]+\theta \VaR_{\Lambda'(y)} (f(X)) \vee \theta  y   \right \} \vee x \right\} .\end{aligned}$$   
We first solve the inner optimization problem for a fixed  $x,y\ge 0$. Note that 
$$\begin{aligned}
 & \inf_{f\in\F}    \left \{ \VaR_{\Lambda(x)}(X)-\VaR_{\Lambda(x)}(f(X)) +(1-\theta)\E[f(X)]+\theta \VaR_{\Lambda'(y)} (f(X)) \vee \theta  y   \right \} \\ =&  \inf_{f\in\F,~y <\VaR_{\Lambda'(y)} (f(X))}   \left \{ \VaR_{\Lambda(x)}(X)-\VaR_{\Lambda(x)}(f(X)) +(1-\theta)\E[f(X)]+\theta \VaR_{\Lambda'(y)} (f(X))    \right \} \wedge\\&   \inf_{f\in\F,~y \ge\VaR_{\Lambda'(y)} (f(X))}   \left \{ \VaR_{\Lambda(x)}(X)-\VaR_{\Lambda(x)}(f(X)) +(1-\theta)\E[f(X)]+\theta  y   \right \}.
\end{aligned}$$
For a fixed $x,y\ge 0$, we solve the above two problems  with constraints separately: 

{\bf Step 1}:   \begin{equation}\label{eq:prob1}\begin{aligned} &\inf_{f\in\F, y<\VaR_{\Lambda'(y)} (f(X))  }   \left \{ \VaR_{\Lambda(x)}(X)-\VaR_{\Lambda(x)}(f(X)) +(1-\theta)\E[f(X)]+\theta \VaR_{\Lambda'(y)} (f(X))    \right \} \\& =\inf_{f\in\F,\lambda\in\R_+ }   \left \{ \VaR_{\Lambda(x)}(X)-\VaR_{\Lambda(x)}(f(X)) +(1-\theta)\E[f(X)]\right.\\
&\quad\left.+\theta \VaR_{\Lambda'(y)} (f(X))   +\lambda (y- \VaR_{\Lambda'(y)} (f(X))  ) \right \},\end{aligned}\end{equation}
where $\lambda\geq0$ is the Lagrange multiplier.  

For $x,y\ge 0$, define 
$$\begin{aligned}g_1(x,y)&= \int_0^\infty h_1(S_X(w))\d w \\
&+ \int_0^\infty ((1-\theta)S_X(w) - h_1(S_X(w))+(\theta-\lambda )  h_2(S_X(w)) )q(w)\d w+\lambda y, \end{aligned}$$ where $h_1(t)=\id_{\{t>1-\Lambda(x)\}}$   and $h_2(t)=\id_{\{t>1-\Lambda'(y)\}}.$ It is clear that the following $q_1$ will minimize $g_1$: 
\begin{equation*}q_1(w)=\left\{\begin{aligned}&0,~~~~~&\text{if }(1-\theta)S_X(w) - h_1(S_X(w))+(\theta -\lambda) h_2(S_X(w)) >0,\\&1,~~~~~&\text{if }(1-\theta)S_X(w) - h_1(S_X(w))+(\theta-\lambda)  h_2(S_X(w)) <0,\\& c(w),~~~&\text{otherwise},\end{aligned}\right.\end{equation*} where $c$ could be any $[0,1]$-valued {Lebesgue-measurable} function on $\{w:(1-\theta)S_X(w) - h_1(S_X(w))+(\theta-\lambda)  h_2(S_X(w)) =0\}$.  Let $$H_1(w)=(1-\theta)S_X(w) - h_1(S_X(w))+(\theta-\lambda)  h_2(S_X(w)) .$$
Note that for a fixed $x,y\ge 0$, we have two scenarios: $\Lambda(x)\leq\Lambda'(y)$ and  $\Lambda(x)>\Lambda'(y)$.
\\Case 1: \underline{$\Lambda(x)\leq\Lambda'(y)$}
\begin{itemize} \item If $S_X(w)> 1-\Lambda(x)$, that is, $w<\VaR_{\Lambda(x)}(X)\leq \VaR_{\Lambda'(y)}(X)$,  we  have $H_1(w)=(1-\theta)S_X(w)-1+(\theta-\lambda)<0.$  
\item If $S_X(w)\leq 1-\Lambda'(y)$, that is, $w\ge \VaR_{\Lambda'(y)}(X)$,  we have $H _1(w)=(1-\theta)S_X(w)\ge 0$.
\item If $1-\Lambda'(y)<S_X(w)\leq 1-\Lambda(x)$, that is, $\VaR_{\Lambda(x)}(X) \leq w< \VaR_{\Lambda'(y)}(X)$,  we have $H_1(w)=(1-\theta)S_X(w)+(\theta-\lambda) $. Let $d_0=\VaR_{\theta_0}(X)$ with $\theta_0=0\vee \frac{1-\lambda}{1-\theta}\wedge 1$. We have $H_1(w)\leq 0$ if $w\geq d_0$ and    $H_1(w)> 0$ if $w<d_0$. 
\end{itemize}	
    By the above analysis,   we can always construct a strategy  in the form of  $f(w)=w\wedge \VaR_{\Lambda(x)}(X) +(w- \tilde d )_+\wedge (\VaR_{\Lambda'(y)}(X) -\tilde d)$,
 where $\tilde d=d_0\vee \VaR_{\Lambda(x)}(X)\wedge \VaR_{\Lambda'(y)}(X)$, that is  better than any other $f\in\mathcal F$.  Thus, we have reduced the infinite-dimensional problem of finding a function $f$ to minimize  \eqref{eq:prob1} to the one-dimensional problem of finding the optimal value of  $d$. That is, 
$$\begin{aligned}
&\inf_{f\in\F,\VaR_{\Lambda'(y)} (f(X)) >y } \left \{ \VaR_{\Lambda(x)}(X)-\VaR_{\Lambda(x)}(f(X)) +(1-\theta)\E[f(X)]+\theta \VaR_{\Lambda'(y)} (f(X))    \right \}\\ &=\inf_{\substack{\VaR_{\Lambda(x)}(X)\leq d\le \VaR_{\Lambda'(y)} (X)\\d<\VaR_{\Lambda'(y)} (X)+  \VaR_{\Lambda(x)} (X)-y}} \left\{ (1-\theta)\E[X\wedge \VaR_{\Lambda(x)}(X)+(X- d)_+\wedge (\VaR_{\Lambda'(y)}(X) -d)]\right.\\&\quad \left.+\theta( \VaR_{\Lambda'(y)} (X)-d+  \VaR_{\Lambda(x)} (X) )\right\}.\end{aligned} $$
Denote by $$L_1(d)=(1-\theta)\E[(X- d)_+\wedge (\VaR_{\Lambda'(y)}(X) -d)] +\theta( \VaR_{\Lambda'(y)} (X)-d+  \VaR_{\Lambda(x)} (X) ) .$$
 It is clear that $L_1(d)$ is continuous in $d$ and its first-order derivative is given by $L'_1(d)=(1-\theta) F(d)-1<0.$ 
Thus, $d^*=\VaR_{\Lambda'(y)} (X)+  (\VaR_{\Lambda(x)} (X)-y)_-  $ minimizes $L_1(d)$. Therefore, we have 
 $$f^*(w)=w\wedge \VaR_{\Lambda(x)}(X) +(w-d^* )_+\wedge (y- \VaR_{\Lambda(x)}(X))_+,$$
and thus
$$\begin{aligned}&G_1(x,y)\\
&=\inf_{f\in\F,\VaR_{\Lambda'(y)} (f(X)) >y }  \left \{  \VaR_{\Lambda(x)}(X)-\VaR_{\Lambda(x)}(f(X)) +(1-\theta)\E[f(X)]+\theta \VaR_{\Lambda'(y)} (f(X))\right\}\\&= (1-\theta)\E[X\wedge \VaR_{\Lambda(x)}(X) +(X-d^* )_+\wedge (y- \VaR_{\Lambda(x)}(X))_+ ]  \\
&\quad +\theta (\VaR_{\Lambda(x)}(X)+(y-\VaR_{\Lambda(x)}(X))_+).   \end{aligned} $$
\\Case 2: \underline{$\Lambda(x)>\Lambda'(y)$}

\begin{itemize} \item If $S_X(w)> 1-\Lambda'(y)$, that is, $w<\VaR_{\Lambda'(y)}(X)\leq \VaR_{\Lambda(x)}(X)$,  we  have $H_1(w)=(1-\theta)S_X(w)-1+(\theta-\lambda).$ It is obvious that  $H_1(w)< 0$. 
\item If $S_X(w)\leq 1-\Lambda(x)$, that is, $w\ge \VaR_{\Lambda(x)}(X)$,  we have $H_1(w)=(1-\theta)S_X(w)\ge 0$.
\item If $1-\Lambda(x)<S_X(w)\leq 1-\Lambda'(y)$, that is, $\VaR_{\Lambda'(y)}(X) \leq w< \VaR_{\Lambda(x)}(X)$,  we have $H_1(w)=(1-\theta)S_X(w)-1 < 0$.
\end{itemize}	
Thus,  $H_1(w)<0$ for 	$w\in[0,\VaR_{\Lambda(x)}(X))$ 
and  $H_1(w)>0$ for 	$w\in[\VaR_{\Lambda(x)}(X),\infty)$.   In this case, we always have  $f^*(w)=w\wedge \VaR_{\Lambda(x)}(X)$, and $$\VaR_{\Lambda'(y)} (f^*(X))=\VaR_{\Lambda'(y)} (X\wedge \VaR_{\Lambda(x)} (X))= \VaR_{\Lambda'(y)} (X).$$
It then follows that  $$\begin{aligned}
&G_2(x,y)\\
=&\inf_{f\in\F,\VaR_{\Lambda'(y)} (f(X)) >y }   \left \{ \VaR_{\Lambda(x)}(X)-\VaR_{\Lambda(x)}(f(X)) +(1-\theta)\E[f(X)]+\theta \VaR_{\Lambda'(y)} (f(X))\right\}\\
= & \left\{\begin{array}{ll}
    (1-\theta)\E[X\wedge \VaR_{\Lambda(x)}(X)] +\theta \VaR_{\Lambda'(y)}(X), & \text{if } \VaR_{\Lambda'(y)}(X)\ge y, \\
    \infty, & \text{if } \VaR_{\Lambda'(y)}(X)< y.
\end{array}\right.\end{aligned}$$


{\bf Step 2:} Note that  \begin{equation}\label{eq:prob2}\begin{aligned} &\inf_{f\in\F,\VaR_{\Lambda'(y)} (f(X)) \leq y }   \left \{ \VaR_{\Lambda(x)}(X)-\VaR_{\Lambda(x)}(f(X)) +(1-\theta)\E[f(X)]+\theta  y   \right \}\\&= \inf_{f\in\F }   \left \{ \VaR_{\Lambda(x)}(X)-\VaR_{\Lambda(x)}(f(X)) +(1-\theta)\E[f(X)]+\theta  y +\lambda(\VaR_{\Lambda'(y)} (f(X)) - y)  \right \}\\
&= \int_0^\infty h(S_X(w))\d w + \int_0^\infty ((1-\theta)S_X(w) - h_1(S_X(w))+\lambda h_2(S_X(w)))q(w)\d w +\theta y-\lambda y)\\
&=: g_2(x,y). \end{aligned}\end{equation}
It is clear that  the following $q_2$ will minimize $g_2$: 
\begin{equation*}q_2(w)=\left\{\begin{aligned}&0,~~~~~&\text{if }(1-\theta)S_X(w) - h_1(S_X(w))+\lambda h_2(S_X(w))>0,\\&1,~~~~~&\text{if }(1-\theta)S_X(w) - h_1(S_X(w))+\lambda h_2(S_X(w))<0,\\& c(w),~~~&\text{otherwise},\end{aligned}\right.\end{equation*}
where $c$ could be any $[0,1]$-valued {Lebesgue-measurable} function on $\{w:(1-\theta)S_X(w) - h_1(S_X(w))+\lambda h_2(S_X(w))=0\}$.  Define $$H_2(w)=(1-\theta)S_X(w) - h_1(S_X(w))+\lambda h_2(S_X(w)).$$\\Case 1: \underline{$\Lambda(x)\leq\Lambda'(y)$}
\begin{itemize} \item If $S_X(w)> 1-\Lambda(x)$, that is, $w<\VaR_{\Lambda(x)}(X)\leq \VaR_{\Lambda'(y)}(X)$,  we  have $H_2(w)=(1-\theta)S_X(w)-1+\lambda.$ Let $d_0=\VaR_{\theta_0}(X)$ with $\theta_0=(\lambda-\theta)/(1-\theta)$. Then if $w>d_0$, we have $H_2(w)<0$ and if $w<d_0$, we have $H_2(w)>0$.

\item If $S_X(w)\leq 1-\Lambda'(y)$, that is, $y\ge \VaR_{\Lambda'(y)}(X)$,  we have $H_2(w)=(1-\theta)S_X(w)\ge 0$.
\item If $1-\Lambda'(y)<S_X(w)\leq 1-\Lambda(x)$, that is, $\VaR_{\Lambda(x)}(X) \leq w< \VaR_{\Lambda'(y)}(X)$,  we have $H_2(w)=(1-\theta)S_X(w)+\lambda \ge 0$.
\end{itemize}	
 By the above analysis, we can always construct a strategy  in the form of $f(w)=(w-d )_+\wedge(\VaR_{\Lambda(x)}(X) -d) $ with $ d\leq \VaR_{\Lambda(x)}(X)$ that is better than any other $f\in\mathcal F$. 
 \vspace{3mm} 
\\
Case 2: \underline{$\Lambda(x)>\Lambda'(y)$}
\begin{itemize} \item If $S_X(w)> 1-\Lambda'(y)$, that is, $w<\VaR_{\Lambda'(y)}(X)\leq \VaR_{\Lambda(x)}(X)$,  we  have $H_2(w)=(1-\theta)S_X(w)-1+\lambda.$   Let $d_0=\VaR_{\theta_0}(X)$ with $\theta_0=(\lambda-\theta)/(1-\theta)$.Then if $w>d_0$, we have $H_2(w)<0$ and if $w<d_0$, we have $H_2(w)>0$.

  \item If $S_X(w)\leq 1-\Lambda(x)$, that is, $w\ge \VaR_{\Lambda(x)}(X)$,  we have $H_2(w)=(1-\theta)S_X(w)\ge 0$. 

\item If $1-\Lambda(x)<S_X(w)\leq 1-\Lambda'(y)$, that is, $\VaR_{\Lambda'(y)}(X) \leq w< \VaR_{\Lambda(x)}(X)$,  we have $H_2(w)=(1-\theta)S_X(w)-1 < 0$.
\end{itemize}	
Thus,  $f(w)=(w-d)_+\wedge(\VaR_{\Lambda(x)}(X) -d)$ with $d\leq \VaR_{\Lambda'(y)}(X) $.

Combining the above two cases, we have 
$$\begin{aligned}
&\inf_{f\in\F,\VaR_{\Lambda'(y)} (f(X)) \le y }   \left \{ \VaR_{\Lambda(x)}(X)-\VaR_{\Lambda(x)}(f(X)) +(1-\theta)\E[f(X)]+\theta y   \right \} \\ =&\inf_{\substack{  d<\VaR_{\Lambda'(y)}(X) \wedge\VaR_{\Lambda(x)}(X) \\d\geq \VaR_{\Lambda'(y)}(X) \wedge\VaR_{\Lambda(x)}(X) -y}} \left\{ d+(1-\theta) \E[(X-d  )_+\wedge(\VaR_{\Lambda(x)}(X) -d]+\theta y\right\}.\end{aligned}$$
 Denote by $$L_2(d)=(1-\theta)\E[(X- d)_+\wedge (\VaR_{\Lambda(x)}(X) -d)] + d.$$ By the first order condition, we have $L'_2(d)=F(d)>0$, which implies that  $d^*=\max(\VaR_{\Lambda(x)}(X)\wedge  \VaR_{\Lambda'(y)}(X) -y,0)$, and thus $$f^*(w)= (w-(\VaR_{\Lambda(x)}(X)\wedge  \VaR_{\Lambda'(y)}(X)-y)_+ )_+\wedge(\VaR_{\Lambda(x)}(X)-( \VaR_{\Lambda(x)}(X)\wedge  \VaR_{\Lambda'(y)}(X)-y)_+ ).$$  Note that if $y> \VaR_{\Lambda(x)}(X)\wedge  \VaR_{\Lambda'(y)}(X)$,  we have $f^*(w)=w\wedge  \VaR_{\Lambda(x)}(X)$, then $$
 \begin{aligned}G_3(x,y)&=\inf_{f\in\F,\VaR_{\Lambda'(y)} (f(X)) <y }   \left \{ \VaR_{\Lambda(x)}(X)-\VaR_{\Lambda(x)}(f(X)) +(1-\theta)\E[f(X)]+\theta y   \right \}\\ &= (1-\theta) \E[(X\wedge\VaR_{\Lambda(x)}(X)]+y; \end{aligned}$$  otherwise,  for   $y\leq \VaR_{\Lambda(x)}(X)\wedge  \VaR_{\Lambda'(y)}(X)$,  we have $$
 \begin{aligned}G_3(x,y)&=\inf_{f\in\F,\VaR_{\Lambda'(y)} (f(X)) <y }   \left \{ \VaR_{\Lambda(x)}(X)-\VaR_{\Lambda(x)}(f(X)) +(1-\theta)\E[f(X)]+\theta y   \right \}\\ &= (1-\theta)(\VaR_{\Lambda(x)}(X)\wedge  \VaR_{\Lambda'(y)}(X) -y) + \theta  (\VaR_{\Lambda(x)}(X)\wedge  \VaR_{\Lambda'(y)}(X) ) \\&~~~+(1-\theta) \E[(X-(\VaR_{\Lambda(x)}(X)\wedge  \VaR_{\Lambda'(y)}(X)-y) )_+\wedge\\&~~~~(\VaR_{\Lambda(x)}(X)- (\VaR_{\Lambda(x)}(X)\wedge  \VaR_{\Lambda'(y)}(X)-y )] .   \end{aligned}$$

Now that the inner optimization problem has been solved,  we can proceed to solve the outer optimization problem concerning $y$, based on the results from Steps 1 and 2. {Specifically, we have the following expression: $$\begin{aligned}&\inf_{f\in\F}\Lambda\VaR(T_f(X)) \\&= \inf_{x\in\R_+}  \left\{\left\{ \inf_{ y\in \R_+,  \Lambda(x)\leq\Lambda'(y)}  \left \{  G_1(x,y) \right\}   \wedge  \inf_{  y  \leq\VaR_{\Lambda'(y)}(X),   \Lambda(x)>\Lambda'(y)} \left\{G_2(x,y)  \right \}  \wedge \inf_{y\in\R_+}  \left \{  G_3(x,y)  \right \}  \right\}\vee x \right\}.\end{aligned}$$   
 It is evident that $G_1(x,y)$    increases in $y$, while $G_2(x,y)$   decreases in $y$. Thus,  if there exists $y$ such that $\Lambda(x)\leq \Lambda'(y)$, we have $$ \inf_{ y\in\R_+, \Lambda(x)\leq\Lambda'(y)}    G_1(x,y)   = G_1(x,0)=(1-\theta)\E(X\wedge \VaR_{\Lambda(x)}(X) ) +\theta \VaR_{\Lambda(x)}(X), $$ otherwise,  $\inf_{ y\in\R_+,\Lambda(x)\leq\Lambda'(y)}    G_1(x,y) =\infty.$ Similarly,  if there exists $y$ such that   $\Lambda(x)>\Lambda'(y)$, then $$\inf_{y \leq \VaR_{\Lambda'(y)(X)}, \Lambda(x)>\Lambda(y)}   G_2(x,y)   =G_2(x, \Lambda'\VaR(X))=(1-\theta)\E(X\wedge \VaR_{\Lambda(x)}(X) ) +\theta \Lambda'\VaR(X), $$ otherwise,  $\inf_{y \leq \VaR_{\Lambda'(y)(X)}, \Lambda(x)>\Lambda(y)}   G_2(x,y)=\infty$.}
Next, consider the minimization problem for $G_3$. If $y> \VaR_{\Lambda(x)}(X)\wedge  \VaR_{\Lambda'(y)}(X)$,  then $$\begin{aligned}\inf_{y> \VaR_{\Lambda(x)}(X)\wedge  \VaR_{\Lambda'(y)}(X)}   G_3(x,y)   &=G_3(x, \VaR_{\Lambda(x)}(X)\wedge  \Lambda'\VaR(X))\\&=(1-\theta) \E[(X\wedge\VaR_{\Lambda(x)}(X)]+\VaR_{\Lambda(x)}(X)\wedge    \Lambda'\VaR(X).\end{aligned}$$
  Otherwise, with a slight abuse of notation, let $l=\VaR_{\Lambda(x)}(X)\wedge  \VaR_{\Lambda'(y)}(X) -y$ and   $$\bar G_3(l)=(1-\theta) \left\{l+\E[(X-l)_+\wedge( \VaR_{\Lambda(x)}(X)-l)\right\} .$$  It is clear that $\bar  G'_3(l)=F(l)>0$, which implies that $G_3(x,y)$ decreases in $y$. In particular, when $y=\VaR_{\Lambda(x)}(X)\wedge  \VaR_{\Lambda'(y)}(X)$,  that is $y=\VaR_{\Lambda(x)}(X)\wedge  \Lambda'\VaR(X)$, we have $$\begin{aligned}\inf_{y\leq  \VaR_{\Lambda(x)}(X)\wedge  \VaR_{\Lambda'(y)}(X)}  \left \{  G_3(x,y)  \right \} &=G_3(x, \VaR_{\Lambda(x)}(X)\wedge  \Lambda'\VaR(X))\\& =  (1-\theta) \E[X\wedge\VaR_{\Lambda(x)}(X)]+ \theta  (\VaR_{\Lambda(x)}(X)\wedge  \Lambda'\VaR(X)).\end{aligned} $$
To summarize, we have the following result
\begin{equation}\label{eq:outer}\begin{aligned}\inf_{f\in\F}\Lambda\VaR(T_f(X)) =\inf_{x\in\R_+}  \left\{ \left \{ (1-\theta)\E[  X\wedge \VaR_{\Lambda(x)}(X)]+\theta (\Lambda'\VaR ( X)\wedge \VaR_{\Lambda(x)}(X) )   \right \} \vee x \right\}.
\end{aligned}
\end{equation}
Let $$x^*=\inf\{x\in\R_+: (1-\theta)\E[  X\wedge \VaR_{\Lambda(x)}(X)]+\theta (\Lambda'\VaR ( X) \wedge \VaR_{\Lambda(x)}(X))\leq x\}.$$
{ Similarly to the proof of Theorem \ref{thm:1}, the function $$x\mapsto (1-\theta)\E[  X\wedge \VaR_{\Lambda(x)}(X)]+\theta (\Lambda'\VaR ( X) \wedge \VaR_{\Lambda(x)}(X))$$ is decreasing in $x$. As $\Lambda(0)<1$ and $\p(X<\infty)=1$, we have $x^*<\infty$. Because $\Lambda$ is right-continuous, the infimum in \eqref{eq:outer} is attainable and thus
$$x^*=\min_{x\in\R_+}  \left\{ \left \{ (1-\theta)\E[  X\wedge \VaR_{\Lambda(x)}(X)]+\theta (\Lambda'\VaR ( X)\wedge \VaR_{\Lambda(x)}(X) )   \right \} \vee x \right\},$$
which is obtained at $x=x^*$.}
Therefore, we have
$\inf_{f\in\F}\Lambda\VaR(T_f(X))=x^*.$
This completes the proof.
\end{proof}  
Equation \eqref{eq:f4} shows that the ceded loss function takes the form of  a limited loss transform. Under this strategy,  the DM retains any loss exceeding the  threshold $\VaR_{\Lambda(x^*)}(X)$. While this threshold is determined by the function $\Lambda$,  the value of $x^*$ in \eqref{eq:x4} is  determined based on $\Lambda'\VaR$.  Notably, as $\Lambda' \VaR(X)$ increases, $x^*$ also increases, which reduces the threshold. Consequently, a higher premium results in the DM retaining a larger portion of the claim. Further numerical analysis can be found in Section \ref{sec:examples}. 
 
The following corollary directly follows from Theorem \ref{thm:3}. 
\begin{corollary}  
If $\Lambda(x) \equiv \alpha$ and $\Lambda'(y)\equiv\beta$ for any $x,y\ge 0$, where  $\alpha,\beta \in (0,1)$, the solution to \eqref{eq:obj_3}  is given by 
$$f^*(w)=w\wedge\VaR_\alpha(X),$$ and thus $$\inf_{f\in\F}\VaR_\alpha(T_f(X))= (1-\theta) \E[X\wedge\VaR_\alpha(X)]+\theta(\VaR_\beta(X)\wedge\VaR_\alpha(X)).
$$\end{corollary}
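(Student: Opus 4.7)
The corollary is a direct specialization of Theorem \ref{thm:3} to constant $\Lambda$ and $\Lambda'$, so my plan is simply to substitute and simplify rather than rerun the optimization from scratch. The key observation is that when $\Lambda(x)\equiv\alpha$ for all $x\geq 0$, we have $\VaR_{\Lambda(x)}(X)=\VaR_\alpha(X)$ independently of $x$, and when $\Lambda'(y)\equiv\beta$, the representation in Lemma \ref{lem:representation} gives $\Lambda'\VaR(X)=\VaR_\beta(X)$. Likewise, for the outer layer, $\Lambda\VaR(Z)=\VaR_\alpha(Z)$ for any $Z$, so the left-hand side of \eqref{eq:obj_3} indeed collapses to the left-hand side of \eqref{eq:obj_4}.

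Next I would plug these identities into the formulas for $x^*$ and $f^*$ from Theorem \ref{thm:3}. In \eqref{eq:x4}, the quantity $(1-\theta)\E[X\wedge\VaR_{\Lambda(x)}(X)]+\theta(\Lambda'\VaR(X)\wedge\VaR_{\Lambda(x)}(X))$ becomes the constant
\[
C:=(1-\theta)\E[X\wedge\VaR_\alpha(X)]+\theta(\VaR_\beta(X)\wedge\VaR_\alpha(X)),
\]
so the infimum defining $x^*$ is just $\inf\{x\in\R_+:C\leq x\}=C$. Feeding this back into \eqref{eq:f4} gives $f^*(w)=w\wedge\VaR_{\Lambda(x^*)}(X)=w\wedge\VaR_\alpha(X)$, and the minimum value $\Lambda\VaR(T_{f^*}(X))$ equals $C$, which is exactly the claimed right-hand side.

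There is no genuine obstacle here; the only thing to be careful about is consistency of the representation when $\Lambda$ takes a single constant value, i.e.\ that Lemma \ref{lem:representation} indeed recovers $\VaR_\alpha$ when $\Lambda\equiv\alpha$ (which follows because $\inf_{x\in\R_+}\{\VaR_\alpha(X)\vee x\}=\VaR_\alpha(X)$). Given that, the corollary follows in a few lines as a direct substitution, so in the write-up I would present it as a short paragraph that quotes Theorem \ref{thm:3}, substitutes the constant values of $\Lambda$ and $\Lambda'$, and reads off $f^*$ and the optimal value.
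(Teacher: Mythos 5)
Your proposal is correct and matches the paper's treatment: the paper states that this corollary ``directly follows from Theorem \ref{thm:3},'' and your substitution of the constant functions $\Lambda\equiv\alpha$, $\Lambda'\equiv\beta$ into \eqref{eq:f4} and \eqref{eq:x4}, yielding $x^*=C$ and $f^*(w)=w\wedge\VaR_\alpha(X)$, is exactly that specialization. No gaps.
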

\section{Optimal insurance  under model uncertainty}\label{sec:model-uncertainty}
In the above sections, we assumed that the distribution of the underlying loss random variable is either given or known. However, this assumption is sometimes unrealistic because the actual loss distribution is generally unavailable in most practical scenarios. This recognition prompts us to explore model uncertainty in the determination of optimal insurance policies. In the following context, we investigate an optimal insurance problem incorporating a likelihood ratio uncertainty  or assuming that the loss is partially known in the sense that only the first two moments of the loss are available.  The  closed-form solutions are obtained for both uncertainty scenarios.  The premium is calculated using \eqref{eq:ev_p}, which is based on the expected value premium principle.    

 \subsection{Likelihood ratio uncertainty}
Suppose the DM's reference probability measure $\mathbb{P}$ is fixed, and $\Lambda\VaR$ is evaluated under $\mathbb{P}$. 
However, the true probability measure  and the corresponding distribution of \( X \) might not be known. To account for this uncertainty,  we    use the likelihood ratio to describe the DM's attitude towards uncertainty in probability measures, as introduced in \cite{LMWW22} and \cite{LLS24}. 

 We define \( \mathcal{P} \) as the collection of probability measures that are absolutely continuous with respect to \( \mathbb{P} \). The uncertainty set is defined as
$$
\mathcal{P}_\beta=\left\{\mathbb{Q} \in \mathcal{P} : \frac{\mathrm{d} \mathbb{Q}}{\mathrm{d} \mathbb{P}} \leq \frac{1}{\beta}\right\},
$$
for a given $\beta \in (0,1]$ representing the degree of uncertainty faced by the DM. 
 A larger value of $\beta$ corresponds to a  lower degree of uncertainty, with $\beta = 1$ indicating that  the DM encounters no uncertainty.
 
 The DM aims to solve the following robust optimization problem  
\begin{equation}\label{eq:robust1}
\min _{f \in \mathcal{F}} \sup _{\mathbb{Q} \in \mathcal{P}_\beta} \Lambda\VaR^{\mathbb{Q}}(T_f),
\end{equation}
where $\Lambda\VaR^{\mathbb{Q}}$ is $\Lambda\VaR$ evaluated under the probability measure $\mathbb{Q}$ instead of $\mathbb{P}$. 

\begin{lemma}[Proposition 4 of \citealp{XH24}]  \label{lem:4}Let $\Lambda: \mathbb{R} \rightarrow[0,1]$ be decreasing. For $\beta \in(0,1]$, define $\Lambda_\beta=\beta\Lambda+1-\beta $. Then
$$
\sup _{\mathbb{Q} \in \mathcal{P}_\beta} \Lambda\VaR^{\mathbb{Q}}(X) = \Lambda_\beta\VaR^{\mathbb{Q}}(X) , \quad X \in \X.
$$
\end{lemma}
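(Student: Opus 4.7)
My plan is to prove this by computing the pointwise worst-case probability of sublevel sets, and then showing that the resulting $\Lambda\VaR$ envelope can be attained simultaneously for all levels by a single, explicit $\mathbb Q^*\in\mathcal P_\beta$. The guiding observation is that although $\sup_{\mathbb Q}$ and $\inf_{x}$ are nested in the definition of $\Lambda\VaR$, the density constraint $d\mathbb Q/d\mathbb P\le 1/\beta$ admits an extremal measure that is simultaneously optimal at every threshold $x$, so the two optima decouple.

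First I would establish the pointwise bound. For fixed $x$, any $\mathbb Q\in\mathcal P_\beta$ satisfies
\[
\mathbb Q(X>x)=\int \mathbf 1_{\{X>x\}}\frac{\d\mathbb Q}{\d\mathbb P}\d\mathbb P\le \frac{1}{\beta}\,\mathbb P(X>x),
\]
so $\mathbb Q(X\le x)\ge \max\{(F_X(x)-(1-\beta))/\beta,\,0\}$. For the upper bound on $\sup_\mathbb Q\Lambda\VaR^{\mathbb Q}(X)$, write $a=\Lambda_\beta\VaR^{\mathbb P}(X)$. By the infimum definition, for every $\varepsilon>0$ there exists $x\in[a,a+\varepsilon]$ with $F_X(x)\ge \Lambda_\beta(x)=\beta\Lambda(x)+1-\beta$. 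The pointwise bound then gives $\mathbb Q(X\le x)\ge\Lambda(x)$ for every $\mathbb Q\in\mathcal P_\beta$, whence $\Lambda\VaR^{\mathbb Q}(X)\le a+\varepsilon$. Letting $\varepsilon\downarrow 0$ yields $\sup_{\mathbb Q}\Lambda\VaR^{\mathbb Q}(X)\le \Lambda_\beta\VaR^{\mathbb P}(X)$.

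For the matching lower bound, I construct an explicit worst-case measure. Let $V_\beta=\VaR_{1-\beta}^{\mathbb P}(X)$; by the continuity of $S_X$ assumed in the paper, $\mathbb P(X>V_\beta)=\beta$. Define
\[
\frac{\d\mathbb Q^*}{\d\mathbb P}=\frac{1}{\beta}\mathbf 1_{\{X>V_\beta\}},
\]
which lies in $\mathcal P_\beta$. A direct computation shows $\mathbb Q^*(X\le x)=0$ for $x\le V_\beta$ and $\mathbb Q^*(X\le x)=(F_X(x)-1+\beta)/\beta$ for $x>V_\beta$, so $\mathbb Q^*$ attains the pointwise minimum at every $x$. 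Substituting into the definition of $\Lambda\VaR^{\mathbb Q^*}(X)$ and using $\Lambda>0$, one sees that thresholds $x\le V_\beta$ can never satisfy the quantile condition, while for $x>V_\beta$ the condition $\mathbb Q^*(X\le x)\ge\Lambda(x)$ is equivalent to $F_X(x)\ge \Lambda_\beta(x)$. Therefore $\Lambda\VaR^{\mathbb Q^*}(X)=\Lambda_\beta\VaR^{\mathbb P}(X)$, which yields the reverse inequality and completes the proof.

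The main obstacle I anticipate is bookkeeping around the infimum rather than any deep analytical point: one must verify that the infimum defining $\Lambda_\beta\VaR^{\mathbb P}(X)$ can be approximated by points where $F_X\ge \Lambda_\beta$ actually holds, since $\Lambda$ is only assumed decreasing (no continuity). Because $F_X$ is continuous on $(0,M]$ while $\Lambda_\beta$ is decreasing and lower semicontinuous from the right in the relevant sense, the inequality $F_X(x)\ge \Lambda_\beta(x)$ can be driven arbitrarily close to $a$ from above, which is exactly what the $\varepsilon$-argument in the upper bound needs. A similar care is required to confirm $V_\beta<\Lambda_\beta\VaR^{\mathbb P}(X)$ so that $\mathbb Q^*$ genuinely forces the infimum past $V_\beta$; this follows from $\Lambda>0$, which ensures $\Lambda_\beta(V_\beta)>1-\beta=F_X(V_\beta)$.
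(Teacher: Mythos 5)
The paper does not prove this statement at all: it is imported verbatim as Proposition 4 of \cite{XH24}, so there is no internal proof to compare against. Your self-contained argument is the natural one and its core is sound. The upper bound is airtight: the pointwise estimate $\mathbb Q(X\le x)\ge (F_X(x)-(1-\beta))/\beta$ for all $\mathbb Q\in\mathcal P_\beta$, combined with the observation that $F_X(x)\ge\Lambda_\beta(x)$ forces $\mathbb Q(X\le x)\ge\Lambda(x)$ simultaneously for every $\mathbb Q$, gives $\sup_{\mathbb Q}\Lambda\VaR^{\mathbb Q}(X)\le\Lambda_\beta\VaR^{\mathbb P}(X)$ via the $\varepsilon$-argument. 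Your key structural insight --- that a single extremal measure attains the pointwise lower envelope of the sublevel probabilities at \emph{every} threshold, so the $\sup_{\mathbb Q}$ and $\inf_x$ decouple --- is exactly what makes the identity work.

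The one genuine flaw is in the construction of $\mathbb Q^*$. The density $\frac{1}{\beta}\mathbf 1_{\{X>V_\beta\}}$ integrates to $\mathbb P(X>V_\beta)/\beta$, and this equals $1$ only when $\mathbb P(X>V_\beta)=\beta$ exactly. The paper's standing assumption allows $F_X$ to jump at $0$, and the lemma itself is stated for general $X\in\mathcal X$; whenever $F_X(V_\beta)>1-\beta$ (in this paper's setting, whenever $F_X(0)>1-\beta$, which forces $V_\beta=0$), your $\mathbb Q^*$ is a subprobability measure and the lower-bound step collapses. The repair is routine: assign density $1/\beta$ on $\{X>V_\beta\}$ and spread the residual mass $\bigl(F_X(V_\beta)-(1-\beta)\bigr)/\beta$ over $\{X\le V_\beta\}$ with density $\bigl(F_X(V_\beta)-(1-\beta)\bigr)/\bigl(\beta F_X(V_\beta)\bigr)\le 1/\beta$; one checks that the resulting measure still satisfies $\mathbb Q^*(X\le x)=\max\{(F_X(x)-(1-\beta))/\beta,0\}$ for all $x$, after which your identification of $\{x:\mathbb Q^*(X\le x)\ge\Lambda(x)\}$ with $\{x:F_X(x)\ge\Lambda_\beta(x)\}$ (using $\Lambda>0$) goes through unchanged. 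With that patch the proof is complete; without it, the claimed attainment of the supremum is unjustified in the atomic case.
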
 

The following theorem is readily derived by combining Lemma \ref{lem:4} and Theorem \ref{thm:1}. Specifically, it can be obtained by substituting \(\Lambda\) with \(\Lambda_\beta\) throughout Theorem \ref{thm:1}. Recall that  $d^*=\VaR_{\theta^*}(X)$, where $\theta^*=\theta/(1+\theta)$. 
	\begin{theorem}\label{thm:4} { Let $\Lambda\in\mathcal D$ and suppose that $\Lambda$ is right-continuous.} For $\beta \in(0,1]$, define $\Lambda_\beta=\beta\Lambda+1-\beta $.  The optimal $f^{*}$ that solves the $\Lambda\VaR$-based insurance model  \eqref{eq:robust1} over the class of ceded loss functions $\F$ is given by
$$
f^{*}(x)= \begin{cases}\min \left\{\left(x-d^*\right)_{+}, \VaR_{\Lambda_\beta(x^*)}(X)-d^*\right\}, &\text{if}~d^*\le \VaR_{\Lambda_\beta(x^*)}(X),  \\ 0, & \text { otherwise },\end{cases}
$$ in which $x^*=\inf\{x\in\R_+: G(x)\leq x\}$ with 
\begin{equation*}G(x)= d^* \wedge \VaR_{\Lambda_\beta(x)}(X)   +(1+\theta) \mathbb{E}\left[\min \left\{\left(X-d^*\right)_{+},\left(\VaR_{\Lambda_\beta(x)}(X)-d^*\right)_{+}\right\}\right].\end{equation*} 
Moreover, 
$\Lambda\VaR^{\mathbb{Q}^*}(T_{f^*}(X))=x^*.$
\end{theorem}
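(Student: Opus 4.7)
The plan is to reduce the robust problem to the non-robust problem already solved in Theorem \ref{thm:1} by replacing $\Lambda$ with the inflated function $\Lambda_\beta = \beta\Lambda + 1 - \beta$, using Lemma \ref{lem:4} to absorb the inner supremum over $\mathcal{P}_\beta$.

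First I would note that the premium $\Pi(f(X)) = (1+\theta)\E[f(X)]$ is computed under the reference measure $\mathbb{P}$, so it is a deterministic constant independent of $\mathbb{Q} \in \mathcal{P}_\beta$. Only the random part $X - f(X)$ of $T_f(X) = X - f(X) + (1+\theta)\E[f(X)]$ depends on the choice of $\mathbb{Q}$. Applying Lemma \ref{lem:4} to the random variable $T_f(X)$ yields
$$\sup_{\mathbb{Q} \in \mathcal{P}_\beta} \Lambda\VaR^{\mathbb{Q}}(T_f(X)) = \Lambda_\beta\VaR(T_f(X)),$$
where the right-hand side is evaluated under $\mathbb{P}$. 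Consequently, the robust problem \eqref{eq:robust1} reduces to the deterministic optimization
$$\min_{f \in \mathcal{F}} \Lambda_\beta\VaR(T_f(X)).$$

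Next I would verify that $\Lambda_\beta$ satisfies the hypotheses required by Theorem \ref{thm:1}, i.e., $\Lambda_\beta \in \mathcal{D}$. Since $\Lambda$ is decreasing with values in $[0,1]$ and not constantly $0$, the affine combination $\Lambda_\beta = \beta\Lambda + (1-\beta)$ is again decreasing, takes values in $[1-\beta,1] \subseteq [0,1]$, and is bounded below by $1-\beta$. For $\beta \in (0,1)$ the function $\Lambda_\beta$ is strictly positive, and for $\beta = 1$ we recover $\Lambda_\beta = \Lambda \in \mathcal{D}$; in either case $\Lambda_\beta \in \mathcal{D}$.

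Having verified these two facts, the conclusion follows by a verbatim invocation of Theorem \ref{thm:1} with $\Lambda$ replaced by $\Lambda_\beta$: the corresponding optimal indemnity has the claimed truncated stop-loss form with threshold $\VaR_{\Lambda_\beta(x^*)}(X) - d^*$, and the minimum value equals $x^*$ defined via $G$ in the statement. To obtain the last assertion $\Lambda\VaR^{\mathbb{Q}^*}(T_{f^*}(X)) = x^*$, I would take $\mathbb{Q}^*$ to be the maximizer of $\mathbb{Q} \mapsto \Lambda\VaR^{\mathbb{Q}}(T_{f^*}(X))$ over $\mathcal{P}_\beta$, which exists by Lemma \ref{lem:4}, and note that at this maximizer the supremum value coincides with $\Lambda_\beta\VaR(T_{f^*}(X)) = x^*$. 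There is no real obstacle here: the entire proof is a bookkeeping exercise once Lemma \ref{lem:4} is applied, so only the routine verification that $\Lambda_\beta \in \mathcal{D}$ and that the premium is $\mathbb{Q}$-independent needs care.
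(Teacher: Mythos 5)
Your proposal is correct and follows essentially the same route as the paper, which likewise obtains Theorem \ref{thm:4} by combining Lemma \ref{lem:4} with Theorem \ref{thm:1} after substituting $\Lambda_\beta$ for $\Lambda$. Your write-up actually supplies slightly more detail than the paper does (the $\mathbb{Q}$-independence of the premium and the check that $\Lambda_\beta\in\mathcal{D}$), but the underlying argument is identical.
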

Note that $\Lambda_\beta = \Lambda + (1 - \Lambda)(1 - \beta) \geq \Lambda$. In particular, if $\beta = 1$, then the robust optimization problem \eqref{eq:robust1} reduces to \eqref{eq:obj_general}. As $\beta$ decreases, $\Lambda_\beta$ clearly increases, indicating that the DM applies a higher overall confidence level function when determining the optimal insurer strategy. Furthermore, from the equation for solving $x^*$, it shows that $x^*$ increases as $\beta$ decreases, due to the increases in $\Lambda_\beta$. Since $\Lambda_\beta$ is a decreasing function of $x$, this results in a trade-off between risk and uncertainty for the DM when using $\Lambda\VaR$ as the measure. We provide a numerical example of \(\Lambda(x^*)\) related to \(\beta\) in Section \ref{sec:examples}. 
\begin{remark} If we assume that the contract follows a stop-loss form as in Section \ref{sec:3.2}, a solution similar to that in Theorem \ref{thm:deductbile} can also be obtained. We only need to replace all instances of $\Lambda$ with $\Lambda_\beta$ in Theorem \ref{thm:deductbile}, so we do not repeat the details here.\end{remark}
We immediately have the following corollary, which  is  also shown in Example  3.1 of  \cite{LLS24}. 
 
\begin{corollary}\label{cor:3}If $\Lambda(x)\equiv\alpha$ for any $x\in\R_+$, where $\alpha\in(0,1)$,    the optimization problem  \eqref{eq:robust1}   can be solved by $$f^*(x)= \begin{cases} \min\{(x-d^*)_{+}, \VaR_{\beta\alpha+1-\beta}(X)-d^*\}, & \text{if}~d^*\leq \VaR_{\beta\alpha+1-\beta}(X)\\0, & \text{otherwise},\end{cases}$$  and $$
\begin{aligned}&\min _{f \in \mathcal{F}} \sup _{\mathbb{Q} \in \mathcal{P}_\beta} \VaR_\alpha^{\mathbb{Q}}(T_f) \\&=d^* \wedge \VaR_{\beta\alpha+1-\beta}(X)   +(1+\theta) \mathbb{E}\left[\min \left\{\left(X-d^*\right)_{+},\left(\VaR_{\beta\alpha+1-\beta}(X)-d^*\right)_{+}\right\}\right].\end{aligned}$$
\end{corollary}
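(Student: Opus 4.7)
The plan is to obtain Corollary~\ref{cor:3} as a direct specialization of Theorem~\ref{thm:4}. First I would observe that when $\Lambda(x)\equiv\alpha$ on $\R_+$, the function $\Lambda_\beta=\beta\Lambda+1-\beta$ appearing in Lemma~\ref{lem:4} and Theorem~\ref{thm:4} collapses to the constant
\[
\Lambda_\beta(x)\equiv \beta\alpha+1-\beta \quad\text{for all } x\ge 0.
\]
Consequently $\VaR_{\Lambda_\beta(x)}(X)=\VaR_{\beta\alpha+1-\beta}(X)$ does not depend on $x$, and the form of the candidate optimal indemnity $f^*$ in Theorem~\ref{thm:4} already yields
\[
f^*(x)=\begin{cases}\min\{(x-d^*)_+,\VaR_{\beta\alpha+1-\beta}(X)-d^*\},& d^*\le \VaR_{\beta\alpha+1-\beta}(X),\\0,&\text{otherwise},\end{cases}
\]
once we substitute the constant $\Lambda_\beta$ back into the formula, without any further optimization work.

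Next I would evaluate the function $G$ from Theorem~\ref{thm:4} at this constant $\Lambda_\beta$ and notice that
\[
G(x)= d^* \wedge \VaR_{\beta\alpha+1-\beta}(X)\;+\;(1+\theta)\,\E\!\left[\min\!\left\{(X-d^*)_+,\bigl(\VaR_{\beta\alpha+1-\beta}(X)-d^*\bigr)_+\right\}\right]
\]
is likewise independent of $x$. Denoting this common value by $C$, the defining relation $x^*=\inf\{x\in\R_+:G(x)\le x\}$ then degenerates to $x^*=\inf\{x\in\R_+:C\le x\}=C$. This gives at once the claimed identity for the optimal objective value
\[
\min_{f\in\F}\sup_{\Q\in\mathcal P_\beta}\VaR_\alpha^{\Q}(T_f)=x^*=C,
\]
with $x^*$ being precisely the expression displayed in the corollary.

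There is no real obstacle to overcome here: the substance of the argument is already contained in Theorem~\ref{thm:4} (itself obtained by combining Theorem~\ref{thm:1} with the sup-representation in Lemma~\ref{lem:4}). The only thing worth highlighting is the degeneracy caused by constant $\Lambda$, which both trivializes the inner optimization over $x$ (reducing $x^*$ to a fixed point identification $x^*=G$) and removes any dependence of the threshold in $f^*$ on $x^*$. I would close by remarking that this recovers the known robust VaR result in \cite{LLS24}, thereby providing a useful consistency check on Theorem~\ref{thm:4}.
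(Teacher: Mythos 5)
Your proposal is correct and follows exactly the paper's route: the paper presents Corollary~\ref{cor:3} as an immediate specialization of Theorem~\ref{thm:4} to constant $\Lambda\equiv\alpha$, which is precisely what you do. Your added observation that the constant $G$ forces $x^*=\inf\{x:C\le x\}=C$ is the only detail the paper leaves implicit, and you have filled it in correctly.
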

Note that $\beta\alpha + 1 - \beta = \alpha + (1 - \alpha)(1 - \beta) \geq \alpha$.  Corollary \ref{cor:3} shows that the deductible in the worst-case scenario is greater than that in  Corollary \ref{cor:1} where model uncertainty is not involved, which is a reasonable outcome given the increased risk exposure. This behavior reflects a risk-averse strategy: when faced with  uncertainty about future outcomes, the DM opts for more insurance coverage to safeguard against potential adverse events.
Additionally, as $\beta$ decreases, $\beta\alpha + 1 - \beta$ increases, leading to a larger $\VaR_{\beta\alpha + 1 - \beta}(X)$. This finding suggests that as ambiguity in the probability measure increases (indicated by a higher ${1}/{\beta}$), the DM is more likely to choose higher insurance coverage. 
\subsection{Mean-variance uncertainty}
In this subsection, we assume that the loss is partially known in the sense that only the first two moments of $X$ are known and finite. Let $\M_+$ be the set of CDFs of random variables in $\X_+$. Given a pair of non-negative mean and standard deviation $(\mu, \sigma)$ of $X$, we define the uncertainty set 
\begin{equation}\label{eq:mv_set} \mathcal{S}(\mu, \sigma)=\left\{F\in \M_+:   \int_0^{\infty} x \mathrm{~d} F(x)=\mu, \int_0^{\infty} x^2 \mathrm{~d} F(x)=\mu^2+\sigma^2\right\}.\end{equation}
The $\mathcal{S}(\mu, \sigma)$ specifies a set of CDFs with the same mean $\mu$ and variance $\sigma^2$.   
The distributionally robust insurance problem is defined by 
\begin{equation}\label{eq:R_vf}
    \inf_{f\in\mathcal F} \sup_{F\in \mathcal{S}(\mu, \sigma)} \Lambda\VaR(T_f(X)),
\end{equation} 
where  $T_f$ is given by \eqref{eq:T_f}. 

For a set of distributions $\M$, we denote $G_{\M}^-(x)=\inf_{G\in\M}G(x)$, for $x\in\R_+$.  In the following context, we abuse the notation by treating $\Lambda\VaR$ also as a mapping from a set $\mathcal M$ of distributions to $\R$; that is, we write $\Lambda\VaR(G) =\Lambda\VaR (Y)$ when $Y$ follows the distribution $G$.

The following lemma plays a key role in solving the problem \eqref{eq:R_vf}.
\begin{lemma}[Theorem 2 of \citealp{HL24}]  \label{lem:3}Let $\M$ be a set of distributions and $\Lambda\in\mathcal D$. 
We have  $\sup_{G\in\mathcal M}\Lambda\VaR(G)=\Lambda\VaR (G_{\mathcal M}^-)$.
\end{lemma}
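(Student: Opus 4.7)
The plan is to prove the equality via two matching inequalities, working from the pointwise representation $\Lambda\VaR(G)=\inf\{x\in\R_+:G(x)\ge\Lambda(x)\}$ and using throughout that distribution functions are nondecreasing while $\Lambda$ is decreasing. The easy direction $\sup_{G\in\M}\Lambda\VaR(G)\le\Lambda\VaR(G_{\M}^-)$ is immediate: since $G(x)\ge G_{\M}^-(x)$ for every $G\in\M$ and every $x\in\R_+$, the set $\{x:G(x)\ge\Lambda(x)\}$ contains $\{x:G_{\M}^-(x)\ge\Lambda(x)\}$, so taking infima yields $\Lambda\VaR(G)\le\Lambda\VaR(G_{\M}^-)$, and the supremum over $G\in\M$ preserves this bound.

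For the reverse inequality, the structural observation I would exploit is that whenever $H$ is nondecreasing (which applies to each $G\in\M$ and also to $G_{\M}^-$, being a pointwise infimum of nondecreasing functions), the set $S_H:=\{x:H(x)\ge\Lambda(x)\}$ is upward-closed: if $x\in S_H$ and $y\ge x$, then $H(y)\ge H(x)\ge\Lambda(x)\ge\Lambda(y)$. A direct consequence is that $H(x)<\Lambda(x)$ for every $x$ strictly below $\inf S_H$. Applied to $H=G_{\M}^-$, this tells me that for any $x<\Lambda\VaR(G_{\M}^-)$ one has $G_{\M}^-(x)<\Lambda(x)$, so setting $\delta=\Lambda(x)-G_{\M}^-(x)>0$ and invoking the definition of the pointwise infimum produces some $G_x\in\M$ with $G_x(x)<G_{\M}^-(x)+\delta/2<\Lambda(x)$.

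I would then propagate this strict inequality via monotonicity: $G_x(y)\le G_x(x)<\Lambda(x)\le\Lambda(y)$ for every $y\le x$, so no $y\le x$ lies in $S_{G_x}$, and upward-closedness forces $\Lambda\VaR(G_x)\ge x$. Taking supremum over $G\in\M$ gives $\sup_{G\in\M}\Lambda\VaR(G)\ge x$ for every $x<\Lambda\VaR(G_{\M}^-)$, and letting $x\uparrow\Lambda\VaR(G_{\M}^-)$ completes the argument.

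The step I expect to be the main obstacle is formalizing the upward-closedness observation cleanly enough that the implication ``$x<\Lambda\VaR(G_{\M}^-)\Rightarrow G_{\M}^-(x)<\Lambda(x)$'' is rigorous despite the fact that $G_{\M}^-$ need not be right-continuous or itself a proper CDF, and despite possible jumps of $\Lambda$. All of these difficulties are sidestepped by working only with $x$ strictly below the infimum, rather than at the infimum itself, and by the limiting argument $x\uparrow\Lambda\VaR(G_{\M}^-)$; edge cases such as $\Lambda\VaR(G_{\M}^-)=0$ or $\Lambda\VaR(G_{\M}^-)=\infty$ are handled uniformly by this strict-inequality approach.
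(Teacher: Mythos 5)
Your argument is correct, and it is worth noting that the paper itself offers no proof of this statement: Lemma~\ref{lem:3} is imported verbatim as Theorem~2 of \cite{HL24}, so there is nothing internal to compare against. Your two-inequality scheme is the natural one and all steps check out: the inequality $\sup_{G\in\M}\Lambda\VaR(G)\le\Lambda\VaR(G_\M^-)$ follows from set inclusion exactly as you say, and for the reverse direction the choice of $G_x\in\M$ with $G_x(x)<\Lambda(x)$ at a point $x<\Lambda\VaR(G_\M^-)$, propagated to all $y\le x$ via the monotonicity of $G_x$ and of $\Lambda$, correctly yields $S_{G_x}\subseteq(x,\infty)$ and hence $\Lambda\VaR(G_x)\ge x$; the limit $x\uparrow\Lambda\VaR(G_\M^-)$ then closes the gap, with the edge cases handled as you describe. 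One point deserves emphasis: your proof uses that $\Lambda$ is decreasing in the propagation step $\Lambda(x)\le\Lambda(y)$ for $y\le x$, whereas the lemma as transcribed in the paper only states $\Lambda:\R_+\to[0,1]$. This is not a defect of your proof but a gap in the statement: without monotonicity the result is false (one can build a two-element $\M$ and a piecewise-constant increasing $\Lambda$ for which each $G\in\M$ crosses $\Lambda$ early on a different interval while the pointwise infimum crosses it only much later), so the decreasing assumption --- which is the paper's standing hypothesis $\Lambda\in\mathcal D$ --- is essential, and your proof correctly isolates exactly where it enters.
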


\begin{proposition}\label{prop:4} For a set of distributions \(\mathcal{M}\) that includes the candidate distribution of \(X\) { and $\Lambda\in\mathcal D$}, we have
$$
\inf_{f \in \mathcal{F}} \sup_{F \in \mathcal{M}} \Lambda\VaR(T_f(X)) = \inf_{x \in \mathbb{R}_+} \left\{ \inf_{f \in \mathcal{F}} \sup_{F \in \mathcal{M}} \VaR_{\Lambda(x)}(T_f(X)) \vee x \right\}.
$$
\end{proposition}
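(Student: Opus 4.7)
The approach mirrors Proposition \ref{prop:var_problem}, inserting two applications of Lemma \ref{lem:3} so as to move the worst-case supremum past the representation infimum of Lemma \ref{lem:representation}. Fix $f \in \mathcal{F}$, and let $\mathcal{M}_f$ denote the set of distributions of $T_f(X)$ induced as $F_X$ ranges over $\mathcal{M}$; this is well-defined because $X - f(X)$ is a deterministic function of $X$ and the premium $\Pi(f(X)) = (1+\theta)\E[f(X)]$ is a distribution functional. Lemma \ref{lem:3} applied to $\mathcal{M}_f$ gives
$$\sup_{F \in \mathcal{M}} \Lambda\VaR(T_f(X)) = \Lambda\VaR(G_{\mathcal{M}_f}^-),$$
and Lemma \ref{lem:representation} then expands the right-hand side as $\inf_{x \in \mathbb{R}_+}\{\VaR_{\Lambda(x)}(G_{\mathcal{M}_f}^-) \vee x\}$.

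Next, a second application of Lemma \ref{lem:3}, now to $\mathcal{M}_f$ with the constant function identically equal to $\Lambda(x)$ (so that $\Lambda\VaR$ specializes to $\VaR_{\Lambda(x)}$), rewrites $\VaR_{\Lambda(x)}(G_{\mathcal{M}_f}^-) = \sup_{F \in \mathcal{M}} \VaR_{\Lambda(x)}(T_f(X))$. Combining these identities yields
$$\sup_{F \in \mathcal{M}} \Lambda\VaR(T_f(X)) = \inf_{x \in \mathbb{R}_+}\left\{\sup_{F \in \mathcal{M}} \VaR_{\Lambda(x)}(T_f(X)) \vee x\right\}.$$
Taking $\inf_{f \in \mathcal{F}}$ on both sides, swapping the two infima (which commute over the product domain $\mathcal{F} \times \mathbb{R}_+$), and invoking the elementary identity $\inf_f\{A(f) \vee x\} = \{\inf_f A(f)\} \vee x$ for a fixed $x$ (which holds because $a \mapsto a \vee x$ is nondecreasing and continuous) delivers the claimed formula.

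The proof is essentially bookkeeping and I anticipate no real obstacle. The only conceptual step, namely pushing the supremum over $\mathcal{M}$ inside the representation infimum over $x$, is exactly the content of Lemma \ref{lem:3}, which is stated uniformly for any decreasing $\Lambda:\R_+\to[0,1]$ and therefore covers both the original $\Lambda$ and the constant function $\Lambda(x)$ used in the second application. The commuting of the two infima and the pull-out of the fixed $x$ from $\inf_f$ require no extra hypotheses on $\mathcal{M}$.
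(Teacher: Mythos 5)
Your proposal is correct and follows essentially the same route as the paper: one application of Lemma \ref{lem:3} to identify the worst-case $\Lambda\VaR$ with $\Lambda\VaR(G_{\mathcal{M}_f}^-)$, the representation of Lemma \ref{lem:representation}, the VaR specialization of the worst-case identity (which the paper cites as Corollary 1 of \cite{HL24} rather than re-deriving it from Lemma \ref{lem:3} with a constant $\Lambda$, but this is the same fact), and finally the exchange of the two infima and the pull-out of the fixed $x$. No substantive difference.
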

\begin{proof} By Lemma \ref{lem:3}, for $d\geq0$, we have 
$$   \sup_{F\in \mathcal M} \Lambda\VaR(T_f(X)) = \sup_{G\in \mathcal M _f} \Lambda\VaR(G)=\Lambda\VaR (G_{\M_f}^-),$$
where $$ \mathcal{M}_f=\{G:[0,\infty)\to [0,1]:  G ~\text{is the distribution function of} ~T_f,  F\in \mathcal M\}.$$
Because  $\sup_{F\in \mathcal M}  \VaR_{\Lambda(x)}(T_f(X))=\sup_{G\in\M_f}\VaR_{\alpha}(G)=\VaR_{\alpha}(G_{\M_f}^-)$ \citep[see Corollary 1 of][]{HL24}, by Lemma \ref{lem:representation},  we have 
$$\begin{aligned}
\inf_{f\in\mathcal F} \sup_{F\in \mathcal M}  \Lambda\VaR(T_f(X))=\inf_{f\in\mathcal F}   \Lambda\VaR (G_{\M_f}^-)&=  \inf_{f\in\mathcal F} \inf_{x\in\R_+} \left\{\VaR_{\Lambda(x)} (G_{\M_f}^- ) \vee x\right\}\\
&=  \inf_{x\in\R_+} \inf_{f\in\mathcal F}\left\{\sup_{F\in \mathcal M}  \VaR_{\Lambda(x)}(T_f(X))  \vee x\right\}\\
&=  \inf_{x\in\R_+} \left\{\inf_{f\in\mathcal F}\sup_{F\in \mathcal M}  \VaR_{\Lambda(x)}(T_f(X))  \vee x\right\},  \end{aligned}
 $$which completes the proof.
\end{proof} In fact, Proposition \ref{prop:4} shows that  we can switch the order of the supremum and infimum in the problem \eqref{eq:R_vf} as follows: 
$$\inf_{f\in\mathcal F}\sup_{F\in \mathcal M}\inf_{x\in\R_+}   \left\{\VaR_{\Lambda(x)}(T_f(X))  \vee x\right\}= \inf_{x\in\R_+}   \left\{ \inf_{f\in\mathcal F}\sup_{F\in \mathcal M} \VaR_{\Lambda(x)}(T_f(X))  \vee x\right\}.$$
 It is important to note that the conclusion in Proposition \ref{prop:4} is independent of specific forms of \(\mathcal{M}\); it does not need to be the uncertainty set defined by \(\mathcal{S}_{\mu, \sigma}\) in \eqref{eq:mv_set}. It can also  be applied to other types of uncertainty sets, such as the Wasserstein uncertainty set or risk aggregation sets; see Section 4 of \cite{HL24}. Also, for \(\alpha \in (0,1)\), the optimization problem 
\begin{equation}\label{eq:worst}\inf_{f \in \mathcal{F}} \sup_{F \in \mathcal{M}} \VaR_{\alpha}(T_f(X))\end{equation} represents the robust insurance problem with VaR. This problem has been studied in the literature; see e.g., \cite{LM22} and \cite{CLY24}.

As in Section \ref{sec:3.2}, we restrict the set of admissible insurance indemnities to be the set of  stop-loss contracts, that is,  we assume $f(x)=(x-l)_+$ for some $l\geq0$.  Thus, the insurer's optimization problem \eqref{eq:R_vf}  becomes 
\begin{equation}\label{eq:robust2}
\min _{l\geq0} \sup _{F\in \mathcal{S}_{\mu,\sigma}} \Lambda\VaR(T_l(X)),
\end{equation}
where $T_l$ is defined in \eqref{eq:T_l}.

For the special case where \(\Lambda(x) \equiv \alpha\) for any \(x \in \mathbb{R}_+\), the optimization problem \eqref{eq:robust2} reduces to the robust insurance problem with VaR, which has been addressed by Theorem 2 of \cite{LM22}.

\begin{lemma}[Theorem 2 of \citealp{LM22}] \label{lem:5} Let $\Lambda(x) \equiv \alpha\in(0,1)$. 
 If  $\theta \leq \sigma^2/\mu^2$, then an optimal deductible of the optimization problem \eqref{eq:robust2} is
$
l^*= \infty \id_{\{\alpha< \theta^*\}},
$
and the optimal value is
$$
\inf_{l\geq0}\sup_{F\in \mathcal S (\mu,\sigma )} \VaR_{\alpha}(T_l(X))  = \begin{cases} (1+\theta) \mu, &\text{if~} \alpha\geq \theta^*, \\\frac{\mu}{1-\alpha}, & \text{if~}\alpha<\theta^* .\end{cases}
$$
If $\theta> \sigma^2/\mu^2$, then the optimal deductible of the optimization problem \eqref{eq:robust2}   is $$
l^*=\left(\mu-\sigma \frac{1-\theta}{2 \sqrt{\theta}}\right)\id_{\{\alpha\geq \theta^*\}}+ \infty\id_{\{\alpha< \theta^*\}},
$$ and the optimal value is
$$
\inf_{l\geq0}\sup_{F\in \mathcal S (\mu,\sigma )}\VaR_{\alpha}(T_l(X))  =\sup _{F \in \mathcal{S}(\mu, \sigma)} \VaR_\alpha(X)= \begin{cases}\mu+\sigma \sqrt{\theta}, & \text{if~}\alpha\geq\theta^*, \\ \mu+\sigma \sqrt{\frac{\alpha}{1-\alpha}}, &  \text{if~}\alpha<\theta^*.\end{cases}
$$
\end{lemma}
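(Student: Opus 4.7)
The plan is to reduce the robust stop-loss problem to a one-dimensional optimization in $l\ge 0$ by exploiting cash-additivity of $\VaR_\alpha$ together with the two classical sharp moment bounds on $\mathcal{S}(\mu,\sigma)$. For any fixed $F$, cash-additivity and the identity $\VaR_\alpha(X\wedge l)=\VaR_\alpha(X)\wedge l$ give
$$\VaR_\alpha(T_l(X)) = \VaR_\alpha(X)\wedge l + (1+\theta)\,\E[(X-l)_+],$$
so the inner supremum over $F$ only involves the two terms on the right.

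The second step is to compute $\sup_{F\in\mathcal{S}(\mu,\sigma)}$ of the right-hand side for each $l$. I would invoke (i) the sharp Cantelli-type stop-loss bound
$$\sup_{F\in\mathcal{S}(\mu,\sigma)}\E[(X-l)_+] = \tfrac{1}{2}\bigl(\sqrt{(l-\mu)^2+\sigma^2}-(l-\mu)\bigr),$$
attained on a two-point law with $(\mu-a)(b-\mu)=\sigma^2$; and (ii) the sharp quantile bound, which equals $\mu+\sigma\sqrt{\alpha/(1-\alpha)}$ when $\alpha\ge \sigma^2/(\mu^2+\sigma^2)$ and $\mu/(1-\alpha)$ otherwise (the Markov bound becoming tight in the low-$\alpha$ regime because $X\ge 0$). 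An extreme-point argument on $\mathcal{S}(\mu,\sigma)$ reduces the joint worst-case to two-point laws of the form $p\delta_a+(1-p)\delta_b$ with $0\le a<l\le b$, under which both bounds are attained simultaneously and the worst-case of the sum equals the sum of the individual worst-cases.

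Minimizing the resulting deterministic objective in $l$ and comparing against the corners $l=0$ (full insurance, value $(1+\theta)\mu$) and $l=\infty$ (no insurance, value $\sup_F\VaR_\alpha(X)$), the interior first-order condition gives $(l-\mu)/\sqrt{(l-\mu)^2+\sigma^2}=(\theta-1)/(\theta+1)$, hence $l^*=\mu-\sigma(1-\theta)/(2\sqrt{\theta})$ with value $\mu+\sigma\sqrt{\theta}$. The comparison $(1+\theta)\mu\le\mu+\sigma\sqrt{\theta}$ is equivalent to $\theta\le\sigma^2/\mu^2$, which decides whether the interior optimum is dominated by full insurance; and $(1+\theta)\mu\le\mu/(1-\alpha)$ is equivalent to $\alpha\ge\theta^*$, which splits full vs.\ no insurance in the Markov regime. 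Assembling these two dichotomies reproduces the four cases in the lemma.

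The main obstacle is the joint worst-case step: the two functionals $F\mapsto\VaR_\alpha(X)\wedge l$ and $F\mapsto\E[(X-l)_+]$ in isolation are maximized by two-point laws with different atoms, so one must exhibit a single two-point extreme point that simultaneously saturates both bounds. The feasibility constraints $a\ge 0$ and $l^*\ge 0$ on that common extreme point match precisely the thresholds $\alpha$ vs.\ $\sigma^2/(\mu^2+\sigma^2)$ and $\theta$ vs.\ $\sigma^2/\mu^2$, which is exactly why the lemma has the case structure it does; this matching is the technical heart of the argument.
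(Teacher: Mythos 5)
First, note that the paper does not prove this lemma at all: it is imported verbatim as Theorem 2 of Liu and Mao (2022), so there is no internal proof to compare your argument against. Judged on its own terms, your outline follows the moment-problem route one would expect: reduce to $\VaR_\alpha(X)\wedge l+(1+\theta)\E[(X-l)_+]$ by cash additivity, bound each term by its sharp two-moment worst case, and minimize the resulting function of $l$. Your first-order computation is correct: the stationarity condition $(l-\mu)/\sqrt{(l-\mu)^2+\sigma^2}=(\theta-1)/(\theta+1)$ does give $l^*=\mu-\sigma(1-\theta)/(2\sqrt{\theta})$ with value $\mu+\sigma\sqrt{\theta}$, and the corner comparisons $(1+\theta)\mu$ versus $\mu+\sigma\sqrt{\theta}$ and $(1+\theta)\mu$ versus $\mu/(1-\alpha)$ do reproduce the $\theta\lessgtr\sigma^2/\mu^2$ and $\alpha\lessgtr\theta^*$ dichotomies.

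Two gaps remain, both concentrated in the step you yourself flag as the technical heart. First, the stop-loss bound $\tfrac12\bigl(\sqrt{(l-\mu)^2+\sigma^2}-(l-\mu)\bigr)$ is sharp over distributions on all of $\R$; over $\mathcal{S}(\mu,\sigma)$ it is attainable only when the lower atom $\mu-\sigma^2/\beta$ of the extremal two-point law (with $\beta=(l-\mu)+\sqrt{(l-\mu)^2+\sigma^2}$) is nonnegative, and at $l=l^*$ this atom equals $\mu-\sigma/\sqrt{\theta}$, so its nonnegativity is exactly the condition $\theta\ge\sigma^2/\mu^2$. In the regime $\theta\le\sigma^2/\mu^2$ your upper-bounding function is therefore not tight for small $l$: at $l=0$ it gives $(1+\theta)(\mu+\sqrt{\mu^2+\sigma^2})/2>(1+\theta)\mu$, whereas the true value at $l=0$ is exactly $(1+\theta)\mu$ since $\E[X]=\mu$ for every $F$. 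So minimizing your bound does not certify the answer in that regime; you need the nonnegativity-corrected stop-loss bound (or a separate monotonicity argument) there. Second, you only establish an upper bound $\inf_l\sup_F\le\inf_l(\sup_F A_F+\sup_F B_F)$; the matching lower bound requires exhibiting, for \emph{every} $l$, a single $F\in\mathcal{S}(\mu,\sigma)$ that makes $\VaR_\alpha(X)\wedge l+(1+\theta)\E[(X-l)_+]$ at least the claimed optimal value, not only at $l=l^*$. At $l=l^*$ the verification does work: the stop-loss-extremal two-point law there puts mass exactly $\theta^*$ on the lower atom, so its left $\alpha$-quantile exceeds $l^*$ precisely when $\alpha>\theta^*$ --- which is where the $\alpha$ versus $\theta^*$ split really comes from (not from $l^*\ge0$, which still holds at $\theta=\sigma^2/\mu^2$). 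The argument for general $l$ is the part that is genuinely missing.
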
 
Combining  Proposition \ref{prop:4} and  Lemma \ref{lem:5}, we have the following theorem. 
\begin{theorem}\label{thm:5}
{ Let $\Lambda\in\mathcal D$ and suppose that $\Lambda$ is right-continuous.}
If $\theta\leq \sigma^2/\mu^2$, the solution $l^*\in[0,\infty]$ that solves the robust  $\Lambda\VaR$-based insurance model   \eqref{eq:robust2}   is  
given by 
\begin{equation}\label{eq:d1}
l^*= \begin{cases} 0, &\text{if~} x_1^*=(1+\theta) \mu,  \\ \infty, &  
\text{if~} x_1^*<(1+\theta) \mu, \end{cases}
\end{equation}
and the corresponding minimum is given by $\Lambda\VaR\left(T_{l^*}(X)\right)=x_1^*$ with $x_1^*$  defined as
\begin{equation*}
x^*_1=\inf\left\{x\in\R_+: (1+\theta )\mu \id_{\{\Lambda(x)\geq \theta^* \}}  + \frac{\mu}{1-\Lambda(x)} \id_{\{\Lambda(x)< \theta^*  \}}\leq x\right\}. \end{equation*}

If $\theta> \sigma^2/\mu^2$, the solution $l^*\in[0,\infty]$ that solves the robust  $\Lambda\VaR$-based insurance model   \eqref{eq:robust2}   is  
given by 
\begin{equation}\label{eq:d2}
l^*= \begin{cases} \mu-\sigma \frac{1-\theta}{2 \sqrt{\theta}}, &\text{if~} x_2^*=\mu+\sigma \sqrt{\theta},  \\ \infty, &  
\text{if~} x_2^*<\mu+\sigma \sqrt{\theta}, \end{cases}
\end{equation}
and the corresponding minimum is given by $\Lambda\VaR\left(T_{l^*}(X)\right)=x_2^*$ with $x_2^*$  defined as
\begin{equation*}
x^*_2=\inf\left\{x\in\R_+: \left(\mu+\sigma \sqrt{\theta}\right) \id_{\{\Lambda(x)\leq \theta^* \}}  +\left( \mu+\sigma \sqrt{\frac{\Lambda(x)}{1-\Lambda(x)}}\right) \id_{\{\Lambda(x)> \theta^* \}} \leq x\right\}. 
\end{equation*}

 \end{theorem}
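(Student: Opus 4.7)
The plan is to combine Proposition \ref{prop:4} with Lemma \ref{lem:5} in order to reduce the robust $\Lambda\VaR$ stop-loss problem to an outer one-dimensional optimization over $x$, whose inner quantity is supplied directly by \cite{LM22}. First, I would verify that the argument behind Proposition \ref{prop:4} adapts verbatim to the restricted class of stop-loss indemnities. That proof uses only Lemma \ref{lem:representation} (the inf/sup representation of $\Lambda\VaR$) and Lemma \ref{lem:3} (the worst-case representation via the pointwise infimum of distributions), together with a standard swap of infima; none of these steps is sensitive to whether we minimize over $\mathcal F$ or over $\{f_l(x)=(x-l)_+: l\geq 0\}$. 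This yields
\begin{equation*}
\inf_{l\geq 0}\sup_{F\in\mathcal{S}(\mu,\sigma)}\Lambda\VaR(T_l(X))=\inf_{x\in\R_+}\left\{\inf_{l\geq 0}\sup_{F\in\mathcal{S}(\mu,\sigma)}\VaR_{\Lambda(x)}(T_l(X))\vee x\right\}.
\end{equation*}

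Next, for each fixed $x\in\R_+$, I apply Lemma \ref{lem:5} with $\alpha=\Lambda(x)$ to evaluate the inner robust VaR stop-loss problem. This gives a piecewise function $\Phi(x):=\inf_{l\geq 0}\sup_{F\in\mathcal{S}(\mu,\sigma)}\VaR_{\Lambda(x)}(T_l(X))$, whose two branches are determined by whether $\Lambda(x)\geq\theta^*$ or $\Lambda(x)<\theta^*$, and whose global form depends on whether $\theta\leq\sigma^2/\mu^2$ (producing the expression in the definition of $x_1^*$) or $\theta>\sigma^2/\mu^2$ (producing the expression in $x_2^*$). The key monotonicity observation is that because $\Lambda$ is decreasing and each of $(1+\theta)\mu$, $\mu/(1-\Lambda(x))$, $\mu+\sigma\sqrt{\theta}$ and $\mu+\sigma\sqrt{\Lambda(x)/(1-\Lambda(x))}$ is non-increasing in $x$, the function $\Phi$ is decreasing on $\R_+$.

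The outer problem is then $\inf_{x\in\R_+}\{\Phi(x)\vee x\}$, where $\Phi$ is decreasing and $x\mapsto x$ is strictly increasing. Standard reasoning (of the type already used in the proof of Theorem \ref{thm:1}) shows that the envelope is minimized exactly at the crossing point
\begin{equation*}
x_i^*=\inf\{x\in\R_+:\Phi(x)\leq x\},
\end{equation*}
with optimal value equal to $x_i^*$ itself. This recovers the definitions of $x_1^*$ and $x_2^*$ in the theorem. Finally, the optimal deductible $l^*$ is determined by identifying which branch of Lemma \ref{lem:5} is active at $x_i^*$: in Case 1, when $x_1^*=(1+\theta)\mu$ we must have $\Lambda(x_1^*)\geq\theta^*$, and Lemma \ref{lem:5} gives $l^*=0$; when $x_1^*<(1+\theta)\mu$ we are in the $\Lambda(x_1^*)<\theta^*$ regime, giving $l^*=\infty$. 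The analogous dichotomy in Case 2 produces the finite deductible $\mu-\sigma(1-\theta)/(2\sqrt{\theta})$ or $l^*=\infty$, matching \eqref{eq:d2}.

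I expect the main technical obstacle to lie in the careful bookkeeping around the jump of $\Phi$ at the value of $x$ where $\Lambda(x)$ crosses $\theta^*$: the two branches do not agree there, so $\Phi$ may be discontinuous. However, since $x_i^*$ is defined as an infimum rather than as a point of equality, the argument identifying $\inf_x\{\Phi(x)\vee x\}=x_i^*$ still goes through after a short case analysis distinguishing whether the crossing occurs inside one branch or at the jump. Beyond this, the proof is essentially a direct combination of Proposition \ref{prop:4}, Lemma \ref{lem:5}, and monotonicity of $\Phi$, with no heavy additional computation.
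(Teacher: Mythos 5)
Your proposal is correct and follows essentially the same route as the paper: reduce via Proposition \ref{prop:4} to an outer minimization of $\Phi(x)\vee x$ with $\Phi$ supplied by Lemma \ref{lem:5}, use monotonicity of $\Phi$ to identify the optimum with $x_i^*=\inf\{x:\Phi(x)\le x\}$, and read off $l^*$ from the active branch. Your explicit remarks on adapting Proposition \ref{prop:4} to the stop-loss class and on the possible jump of $\Phi$ at the branch point are sound refinements of details the paper passes over silently.
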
 
 \begin{proof} Recall that $\theta^*=\theta/(1+\theta)$.   For  $x\in\R_+$,   $(1-\Lambda(x))(1+\theta) \leq 1$ is equivalent to  $\Lambda(x)\leq \theta^*$. By Proposition \ref{prop:4} and Lemma \ref{lem:5}, 
  if $\theta\leq \sigma^2/\mu^2$, 
 we have $$\begin{aligned}
 \inf_{l\geq0}  \sup_{F\in \mathcal S (\mu,\sigma )} \Lambda\VaR(T_l(X))&=\inf_{x\in\R_+}   \left\{ \inf_{l\geq0}\sup_{F\in \mathcal S (\mu,\sigma )} \VaR_{\Lambda(x)}(T_l(X))  \vee x\right\}\\
 &=\inf_{x\in\R_+} \left\{\left((1+\theta )\mu \id_{\{\Lambda(x)\geq \theta^* \}}  + \frac{\mu}{1-\Lambda(x)} \id_{\{\Lambda(x)<\theta^* \}}\right)\vee x \right\}.
 \end{aligned}$$
 Recall that
\begin{equation*}
x^*_1=\inf\left\{x\in\R_+: (1+\theta )\mu \id_{\{\Lambda(x)\geq \theta^* \}}  + \frac{\mu}{1-\Lambda(x)} \id_{\{\Lambda(x)< \theta^*  \}}\leq x\right\}. \end{equation*} 
{
It is clear that the function
$$x\mapsto (1+\theta )\mu \id_{\{\Lambda(x)\geq \theta^* \}}  + \frac{\mu}{1-\Lambda(x)} \id_{\{\Lambda(x)< \theta^*  \}}$$
is decreasing. Therefore, we have $x^*_1\le (1+\theta)\mu$ and
$$\inf_{l\geq0}  \sup_{F\in \mathcal S (\mu,\sigma )} \Lambda\VaR(T_l(X))=x^*_1.$$ Because $\Lambda$ is right-continuous, we have
$$x^*_1=\min\left\{x\in\R_+: (1+\theta )\mu \id_{\{\Lambda(x)\geq \theta^* \}}  + \frac{\mu}{1-\Lambda(x)} \id_{\{\Lambda(x)< \theta^*  \}}\leq x\right\}.$$
If $x^*_1=(1+\theta)\mu$, then $\Lambda(x^*_1)\ge\theta^*$. By Lemma \ref{lem:5}, we have $l^*=0$. If $x^*_1<(1+\theta)\mu$, then $\Lambda(x^*_1)<\theta^*$. Lemma \ref{lem:5} yields that $l^*=\infty$. Therefore, the optimal deductible $l^*$ for problem \eqref{eq:robust2} is given by \eqref{eq:d1}.
}
On the other hand,  if $\theta>\sigma^2/\mu^2$, we have   
$$\begin{aligned}  
&\inf_{l\geq0}  \sup_{F\in \mathcal S (\mu,\sigma )} \Lambda\VaR(T_l(X))\\
&=\inf_{x\in\R_+} \left\{\left(\left(\mu+\sigma \sqrt{\theta}\right) \id_{\{\Lambda(x)\geq \theta^* \}}  + \left(\mu+\sigma \sqrt{\frac{\Lambda(x)}{1-\Lambda(x)}}\right) \id_{\{\Lambda(x)< \theta^* \}}\right)\vee x \right\}.\end{aligned}$$
 Recall that
 \begin{equation*}
x^*_2=\inf\left\{x\in\R_+: \left(\mu+\sigma \sqrt{\theta}\right) \id_{\{\Lambda(x)\leq \theta^* \}}  +\left( \mu+\sigma \sqrt{\frac{\Lambda(x)}{1-\Lambda(x)}}\right) \id_{\{\Lambda(x)> \theta^* \}} \leq x\right\}. 
\end{equation*}
 { Similarly to the first case}, we have $x_2^*\leq\mu+\sigma \sqrt{\theta}$ and
 $$\inf_{l\geq0}  \sup_{F\in \mathcal S (\mu,\sigma )} \Lambda\VaR(T_l(X))=x^*_2.$$ 
 By { the right-continuity of $\Lambda$ and} Lemma \ref{lem:5}, the optimal deductible $l^*$ is given by \eqref{eq:d2}.
\end{proof}
By comparing the strategies in Lemma \ref{lem:5} and Theorem \ref{thm:5}, we observe that the forms of the optimal solutions are similar. However, for a VaR agent, the optimal strategy depends on the values of \(\alpha\) and \(\theta^*\). In contrast, for a \(\Lambda\VaR\) agent, determining the optimal strategy involves finding the optimal confidence level by solving for \(x^*_i\) \((i=1,2)\). Some observations that hold for a VaR agent can also be applied to a \(\Lambda\VaR\) agent. Specifically, when  \(\theta^*\) is higher than a certain confidence level (\(\alpha\) for the VaR agent and \(\Lambda(x_i^*)\) for the \(\Lambda\VaR\) agent), the DM always chooses to buy no insurance. Conversely, if $\theta^*$ is smaller than the confidence level, the DM either chooses to transfer all risk to the insurer or to purchase some insurance. Notably, in \eqref{eq:d2}, when \(x_2^* = \mu + \sigma \sqrt{\theta}\), the optimal deductible \(l^*\) increases with \(\theta\). This suggests that the insurer opts to retain more risk as the insurance premium rises. Additionally, if \(\theta > 1\), the deductible is larger than the mean of the claim due to the high cost of the premium.

\section{Numerical examples}\label{sec:examples}
In this section, we numerically investigate how the solutions to the optimal insurance problem with $\Lambda\VaR$  interact with changes in the underlying model parameters.  

{Assume that the random loss $X$ follows a Pareto distribution, which is commonly used in modeling large losses caused by catastrophic events. The Pareto distribution function with parameter $\alpha>0$ and $m>0$ is given by
$$\p(X\le x)=1-\left(\frac{m}{x}\right)^\alpha, ~~~x\ge m.$$
Here $\alpha>0$ is the tail parameter and $m$ is the scale parameter. The Pareto distribution with a smaller $\alpha$ has a heavier tail. For $\alpha>1$, $\E[X]=\alpha m/(\alpha-1)$, and for $\alpha\le 1$, $\E[X]=\infty$.  Let 
$$\Lambda(x)= \ell e^{-k x}+0.9, ~~x\in\R_+,$$
where $k\ge 0$ and $\ell \in[0,0.1)$.
Thus, a larger loss is tolerated with a probability level ranging from 0.9 to $0.9+\ell$,  decreasing as the loss increases. An agent with a larger $\ell$ is more concerned with the tail part of the risk and an agent with a larger $k$ is more tolerant of large losses.

We first consider problem \eqref{eq:obj_general} with the expected premium principle and general contracts. 
By Theorem \ref{thm:1}, the optimal ceded loss function is $
f^{*}(x)= \min \{(x-d^*)_{+}, \VaR_{\Lambda(x^*)}(X)-d^*\}
$, where $d^*$ depends on $\theta$ via $d^*=\VaR_{\theta/(1+\theta)}(X)$ and $x^*$, as in \eqref{eq:x_star}, can be calculated numerically. Fixing $\theta=0.25$, we compute $\Lambda(x^*)$ for different Pareto risks in cases where $k\in\{0.001,0.01,0.1\}$ and $\ell \in \{0.099,0.0999\}$. Assume that the expectations of these Pareto risks are equal to 1; that is, for a fixed $\alpha>1$, $m$ is taken to be $1-1/\alpha$. 

We plot $\Lambda(x^*)$ against $\alpha\in(1.25,2.6)$ in Figure \ref{fig:alpha}. When $k$ and $\ell$ are fixed, as $\alpha$ increases,  $\Lambda(x^*)$ decreases first and it may become increasing when $\alpha$ reaches a certain level. In other words, when $\alpha$ is small, for more heavy-tailed risks, more risk needs to be transferred under the optimal strategy. This may seem counterintuitive as the DM is more tolerant of large losses. To understand this intriguing phenomenon, it is important to note that as the Pareto risks have the same expectation,  one risk cannot stochastically dominate another, even if it has a heavier tail. Although heavy-tailed risks are more likely to take extremely large values, it is less probable for them to take moderately large values. Therefore, for DMs who care less about extremely large values of risks, less heavy-tailed Pareto risks can be more risky, leading to less risk being transferred. The observation may flip when $\alpha$ is large, i.e., less risk is transferred for more heavy-tailed risks. This is because in these cases the DM is more concerned about the extremely large values of the risks with larger $\alpha$ and thus more heavy-tailed risks are indeed more dangerous. When $\alpha$ is fixed, DMs with larger $\ell$ and smaller $k$  are more careful about extremely large values of risks, leading to more risk being transferred.

\begin{figure}[h!]
\centering
        \includegraphics[width=1\linewidth]{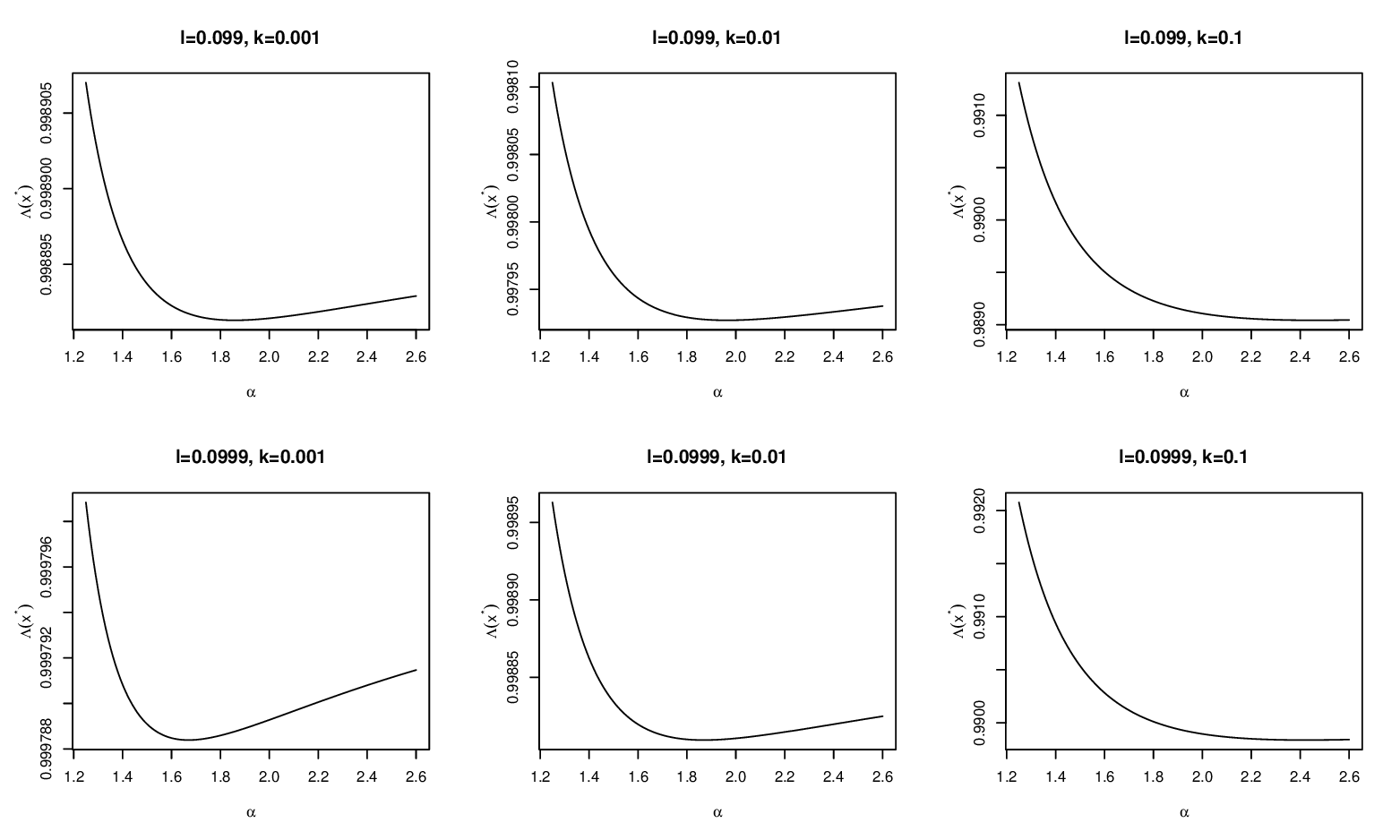}
        \caption{$\Lambda(x^*)$ for a range of values of $\alpha$  in problem \eqref{eq:obj_general} with the expected premium principle.}
        \label{fig:alpha}
\end{figure}

In Figure \ref{fig:theta},  fixing $\alpha=1.5$, $m=1/3$, $\ell=0.09$, and $k=0.1$, we plot $\Lambda(x^*)$ and $\Lambda(x^*)-\theta^*$ for a range of values  of $\theta$, where $\theta^*=\theta/(1+\theta)$. From the left plot of Figure \ref{fig:theta}, we find that less loss from the tail part will be transferred as $\theta$ gets larger. Moreover, less loss will be transferred at an overall level as well, confirmed by the right plot of Figure \ref{fig:theta}.

\begin{figure}[h!]
\centering
        \includegraphics[width=0.8\linewidth]{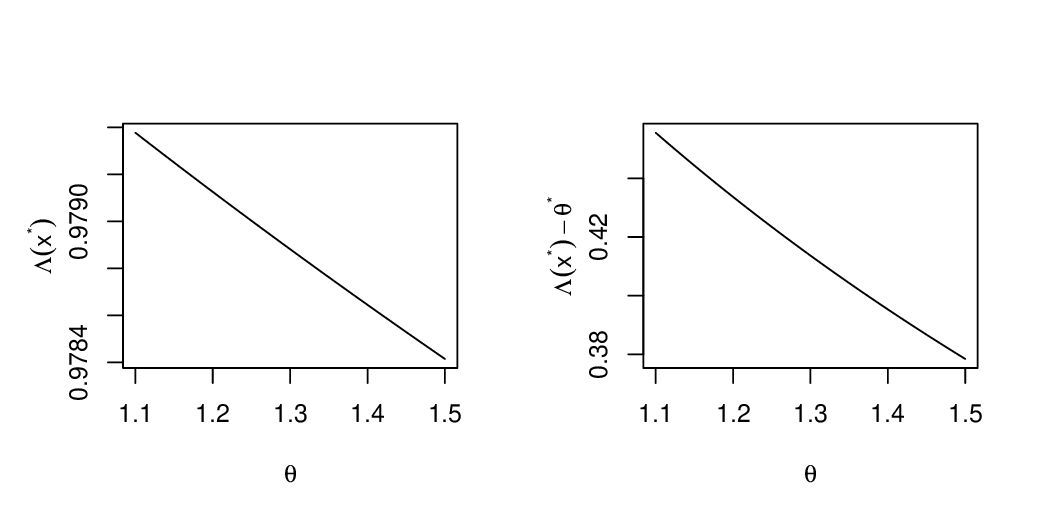}
       \caption{$\Lambda(x^*)$ and $\Lambda(x^*)-\theta^*$ for a range of values of $\theta$  in problem \eqref{eq:obj_general} with the expected premium principle.}
        \label{fig:theta}
\end{figure}
Next, we continue to study problem \eqref{eq:robust1} with the expected premium principle, under likelihood ratio uncertainty.  For $\beta \in(0,1]$, let $\Lambda_\beta=\beta\Lambda+1-\beta $. Here, $\beta$ indicates the level of uncertainty; the smaller $\beta$ is, the more model uncertainty we have.
By Theorem \ref{thm:4}, the optimal ceded loss function is $
f^{*}(x)= \min \{(x-d^*)_{+}, \VaR_{\Lambda_\beta(x^*)}(X)-d^*\}
$. Letting $\alpha=1.5$, $m=1/3$, and $\theta=0.25$, we display $\Lambda_\beta(x^*)$ for a range of values of $\beta$ in Figure \ref{fig:beta}. It is found that as the model uncertainty increases, more loss will be transferred. This observation is consistent with the case when $\VaR$ is used as the risk measure; see Corollary \ref{cor:3}.  As $k$ decreases (or $\ell$ increases), $\Lambda_\beta(x^*)$ becomes larger and more risk is transferred. This is consistent with the case without model uncertainty. Moreover, when model uncertainty is large (i.e., $\beta$ is small), the DM risk attitude becomes less important to their optimal decision.

\begin{figure}[h!]
\centering
        \includegraphics[width=0.8\linewidth]{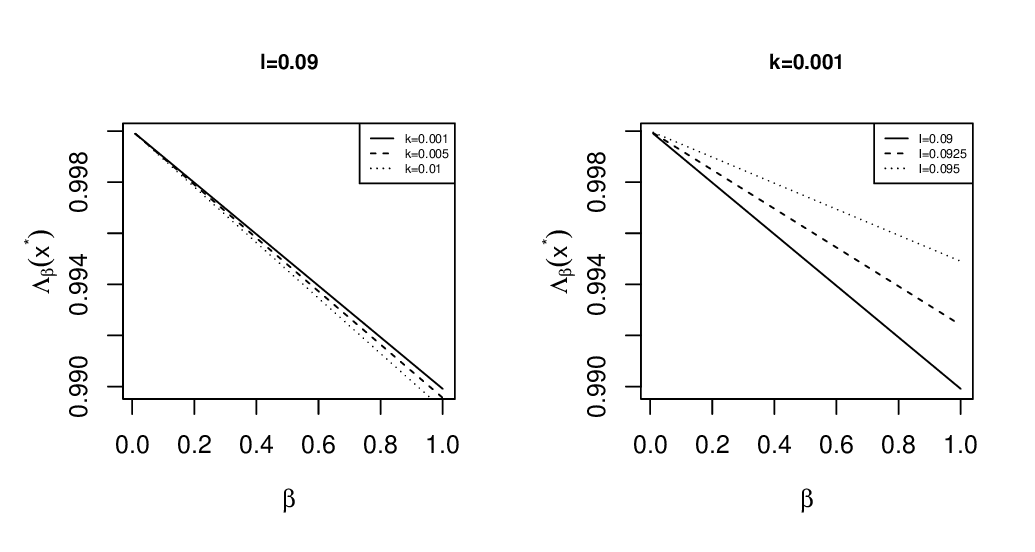}
\caption{$\Lambda_\beta(x^*)$  for a range of values of $\beta$ in problem \eqref{eq:robust1} under likelihood ratio uncertainty.}
        \label{fig:beta}
\end{figure}

Finally, we consider the case when $\Lambda'\VaR$ is incorporated in the premium principle (problem \eqref{eq:obj_3}).
By Theorem \ref{thm:3}, the optimal ceded loss function is $
f^{*}(x)= x \wedge  \VaR_{\Lambda(x^*)}(X)
$. We assume $\Lambda'=\Lambda$, and  plot $\Lambda(x^*)$ as a function of $\alpha\in (1.05,2.6)$ by fixing $\theta=0.25$. We choose to plot $\Lambda(x^*)$ on a different range of $\alpha$ from the range used in problem \eqref{eq:obj_general} mainly for better illustration. The observations are similar to the case of expected value premium as shown in Figure \ref{fig:alpha1}.

\begin{figure}[h!]
\centering
        \includegraphics[width=1\linewidth]{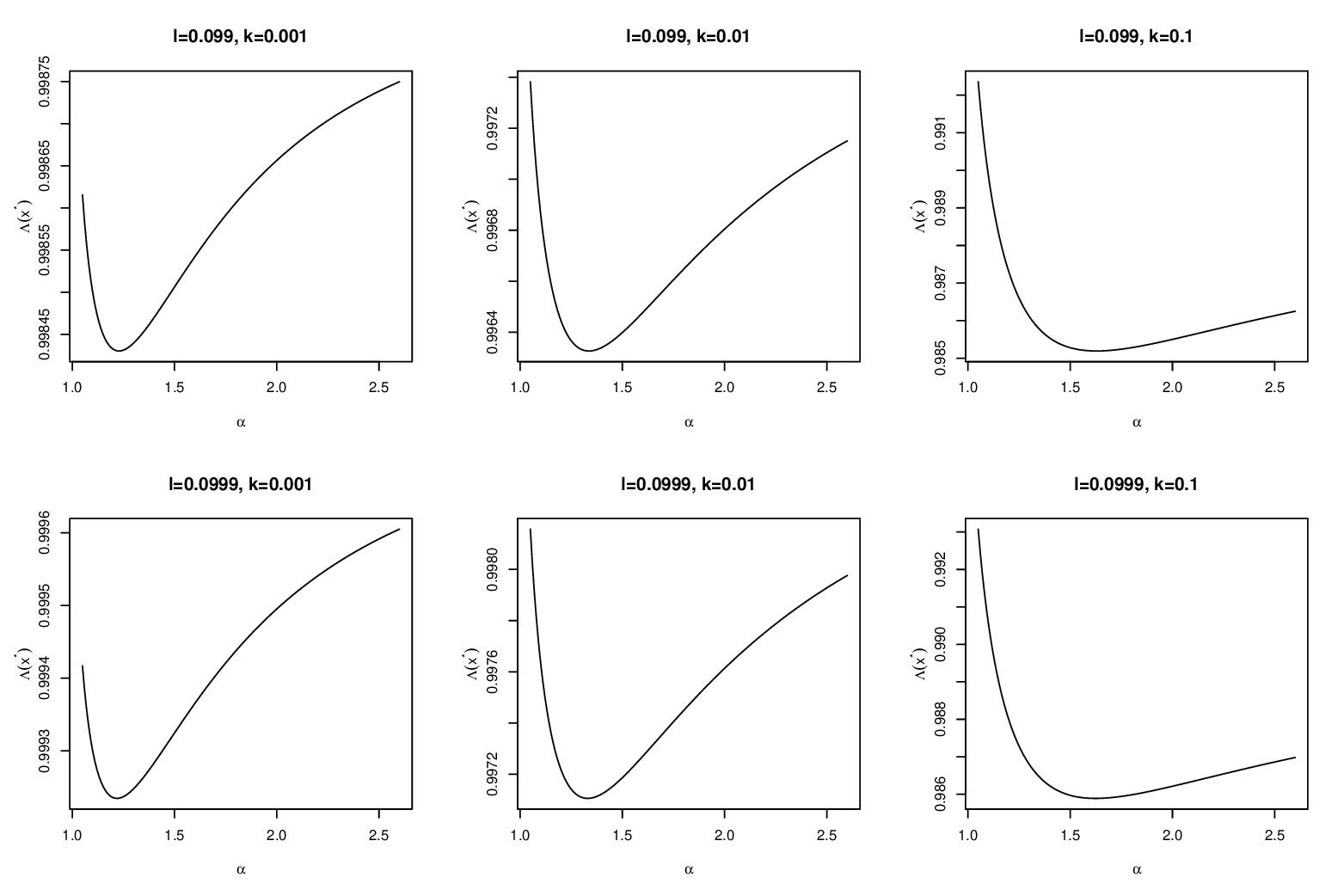}
     \caption{$\Lambda(x^*)$  for a range of values of $\alpha$ in problem \eqref{eq:obj_3}.}
        \label{fig:alpha1}
\end{figure}
}

\section{Conclusion}\label{sec:conc} This paper studies the objective of minimizing the $\Lambda\VaR$ of terminal losses for a DM who is considering to purchase insurance. The insurance premium is given by an expected value premium principle, or based on another $\Lambda'\VaR$ risk measure. Generally, a {limited} stop-loss indemnity or full/no insurance 
is shown to be optimal. If the DM only considers stop-loss indemnities, then the optimal deductible is shown in closed form when the premium is given by an expected value premium principle. Finally, this paper also considers the impact of model uncertainty.

One of the core techniques we use to derive the results is that we represent $\Lambda\VaR$ by $\VaR$ (see Lemma \ref{lem:representation} and Proposition \ref{prop:var_problem}). 
We convey the message that optimization problems based on $\Lambda\VaR$ (either maximization or minimization problems) can usually be safely transferred to equivalent $\VaR$ problems. As a result, the optimal solutions with $\Lambda\VaR$ and $\VaR$ seem to share similar forms. However, depending on specific practical needs, the former solution sometimes has several advantages over the latter, including that (i) the solution with $\Lambda\VaR$ is mathematically more general,  than that with $\VaR$ as its special case; (ii) the solution with $\Lambda\VaR$ allows the DM to adjust her risk/confidence level along with the model parameters and insurance prices. In light of these, optimal solutions (including their existence) for $\Lambda\VaR$ are also non-trivial to get (see in particular Theorems \ref{prop:2} and \ref{thm:3}).

While we believe that this study includes a wide class of cases, we conclude this paper with two suggestions for further research. First, this study assumes homogeneous beliefs, meaning that the premium principle and the objective function are based on the same probability measure. It would be interesting to relax this assumption, and let the premium principle be based on an alternative probability measure, just like in the risk-sharing problem of \cite{LTW24}. Second, we  assume in this paper that there is only one  insurer. For further research, we propose to study the impact of having multiple insurers in the market, each with a different premium principle. 

\vspace{1cm}
\noindent
{\large \bf Disclosure statement}

\vspace{0.2cm}
\noindent
 No potential conflict of interest was reported by the authors.
 
 \noindent
 
 \vspace{1cm}
\noindent{\large {\bf Acknowledgments}}
 \vspace{0.3cm}
\noindent

The authors would like to thank the editor-in-chief and anonymous referees for their careful reading and helpful comments on an earlier version of this paper, which leads to a considerable improvement of the presentation of the work.  The  research of  Xia Han  is supported by the National Natural Science Foundation of China (Grant Nos. 12301604, 12371471, and 12471449).

\appendix

\normalsize
     \setcounter{lemma}{0}
     \renewcommand{\thelemma}{A.\arabic{lemma}} 
          \setcounter{proposition}{0}
     \renewcommand{\theproposition}{A.\arabic{proposition}}
          \setcounter{definition}{0}
     \renewcommand{\thedefinition}{A.\arabic{definition}}
               \setcounter{corollary}{0}
     \renewcommand{\thecorollary}{A.\arabic{corollary}}
               \setcounter{example}{0}
     \renewcommand{\theexample}{A.\arabic{example}} 
               \setcounter{equation}{0}
     \renewcommand{\theequation}{A.\arabic{equation}} 
     
               \setcounter{figure}{0}
     \renewcommand{\thefigure}{A.\arabic{figure}}


	\end{document}